\newcommand{\abs}[1]{\left| #1 \right|}
\newcommand{\aabs}[1]{\left\| #1 \right\|}
\newcommand{\okra}[1]{\left( #1 \right)}
\newcommand{\kwad}[1]{\left[ #1 \right]}
\newcommand{\klam}[1]{\left\{ #1 \right\}}
\newcommand{\floor}[1]{\left\lfloor #1 \right\rfloor}
\newcommand{\dzi}[1]{\left\langle #1 \right\rangle}
\newcommand{\sred}[1]{\mathbf{E}\, #1}
\newcommand{\voc}[1]{\mathbf{V}[#1]}
\DeclareMathOperator{\card}{card}
\newcommand{\myeta}{\eta}
\newcommand{\subseq}{\sqsubseteq}
\newtheorem{definition}{Definition}
\newtheorem{theorem}{Theorem}
\newtheorem{lemma}{Lemma}
\newenvironment*{proof}{\begin{trivlist}\item[]
\noindent\textbf{Proof:}}{$\Box$\par\end{trivlist}}
\newenvironment*{proof*}[1]{\begin{trivlist}\item[]
\noindent\textbf{Proof of Theorem #1:}}{$\Box$\par\end{trivlist}}
\renewcommand{\thesection}{\Roman{section}}
\begin{document}

\title{On the Vocabulary of Grammar-Based Codes \\ and the Logical Consistency
  of Texts}

\author{{\L}ukasz D\k{e}bowski%
  \thanks{
    The research reported in this work was supported in part by 
    the Australian Research Council, DP0210999, and the IST Programme
    of the European Community, under the PASCAL II Network of
    Excellence, IST-2002-506778. The material presented in
    Section~\ref{secCodes} of this paper was published in conference
    communication \cite{Debowski07c}. %
    \newline\null\hspace{\parindent}
    {\L}. D\k{e}bowski is with
    the Institute of Computer Science, Polish Academy of Sciences, ul.
    Ordona 21, 01-237 Warszawa, Poland (e-mail: ldebowsk@ipipan.waw.pl).}}

\date{}




\begin{titlepage}
\thispagestyle{empty}   
\titlepage
\maketitle

\begin{abstract}
  The article presents a~new interpretation for Zipf-Mandelbrot's law
  in natural language which rests on two areas of information theory.
  Firstly, we construct a~new class of grammar-based codes and,
  secondly, we investigate properties of strongly nonergodic
  stationary processes.  The motivation for the joint discussion is to
  prove a~proposition with a~simple informal statement: If a~text of
  length $n$ describes $n^\beta$ independent facts in a repetitive way
  then the text contains at least $n^\beta/\log n$ different words,
  under suitable conditions on $n$.  In the formal statement,
  two modeling postulates are adopted.  Firstly, the words are
  understood as nonterminal symbols of the shortest grammar-based
  encoding of the text.  Secondly, the text is assumed to be emitted
  by a~finite-energy strongly nonergodic source whereas the facts are
  binary IID variables predictable in a~shift-invariant way.
  \\[1em]
  \textbf{Key words}:
%
  excess entropy, grammar-based codes, language models, 
  nonergodic processes,  Zipf-Mandelbrot's law
\end{abstract}


\end{titlepage}
\pagestyle{plain}   


\section{Introduction}
\label{secIntroduction}

\newcommand{\tskip}[1]{[...]}

\begin{quotation}
``If a~Martian scientist sitting before his radio in Mars accidentally
received from Earth the broadcast of an extensive speech \tskip{which he
recorded perfectly through the perfection of Martian apparatus and
studied at his leisure}, what criteria would he have to determine
whether the reception represented the effect of animate process on
Earth, or merely the latest thunderstorm on Earth? It seems that the
only criteria would be the arrangement of occurrences of the elements,
and the only clue to the animate origin would be this: the arrangement
of the occurrences would be neither of rigidly fixed regularity such
as frequently found in wave emissions of purely physical origin nor
yet a~completely random scattering of the same.''
\begin{flushright}
G. K. Zipf \cite[page 187]{Zipf65}
\end{flushright}
\end{quotation}
\medskip

The aim of this paper is to present a~new explanation for the
distribution of words in natural language. To achieve this goal, we
will consider a~new class of grammar-based codes
\cite{KiefferYang00,CharikarOthers05} and we will research
information-theoretic properties of strongly nonergodic stationary
processes.  Thus both linguists and information theorists may find
this paper interesting.

The distribution of words is quite well described by the celebrated
Zipf-Mandelbrot law \cite{Zipf65,Mandelbrot54}, which states that the
word frequency in a~text is an inverse power of the word rank.  Some
effort in applied probability theory has been devoted to inferring
this law for various idealized settings. The most famous explanation
is given by the `monkey-typing' model. In this explanation,
consecutive characters of the text are modeled as independent
identically distributed (IID) variables assuming values of both
letters and spaces whereas the Zipf-Mandelbrot law is obeyed by
strings of letters delimited by spaces
\cite{Mandelbrot54,Miller57}. 
Other published explanations involve, e.g., multiplicative processes
\cite{Simon55} 
and games \cite{HarremoesTopsoe01}.

In this paper, we will consider the integrated version of the
Zipf-Mandelbrot law, usually called Herdan's or Heaps' law in the
English literature. This law says that the number of distinct words
observed in a~text is proportional to a~power of the text length
\cite{KuraszkiewiczLukaszewicz51en,Guiraud54,Herdan64,Heaps78}. The
claim can be inferred from the Zipf-Mandelbrot law assuming certain
regularity of text growth \cite{Khmaladze88,Kornai02}.

The explanation of Herdan's law proposed here is based on previous
partial insights \cite{Debowski06,Debowski07d,
  Debowski09, Debowski10} and addresses two modeling challenges:
\begin{enumerate}
\item Words, in the linguistic sense, can be delimited in the text even
  when the spaces are absent \cite{Wolff80,DeMarcken96,KitWilks99}.
\item Texts, in the linguistic sense, refer to many facts unknown
  a~priori to the reader but they usually do this in a~consistent and
  repetitive way.
\end{enumerate}
Our interest will be focused on proving a~proposition that can be
expressed in the following informal way, assuming thereafter $\beta\in
(0,1)$: \newsavebox{\QLThesis} \savebox{\QLThesis}{(H)}
\begin{itemize}
\item[\usebox{\QLThesis}] If a~text of length $n$ describes $n^\beta$
  independent facts in a repetitive way then the text contains at
  least $n^\beta/\log n$ different words, under appropriate
  quantification over $n$.
\end{itemize}
As we will argue later in this section, some connection can also be
drawn between proposition \usebox{\QLThesis} and the initial
quotation from G.\ K.\ Zipf.

So as to translate proposition \usebox{\QLThesis} into a~provable
statement, we will adopt a~certain mathematical model of words, texts,
and facts that can be motivated linguistically. The main modeling
assumptions are described below. We assume that symbol $\mathbb{N}$
denotes the set of (strictly) positive integers. For a~fixed countable
set $\mathbb{X}$, called the alphabet, we denote the set of nonempty
strings as $\mathbb{X}^+:=\bigcup_{n\in\mathbb{N}} \mathbb{X}^n$ and
the set of all strings as
$\mathbb{X}^*:=\mathbb{X}^+\cup\klam{\lambda}$, where $\lambda$ is the
empty string. The length of a~string $w\in\mathbb{X}^*$, or
$w\in\mathbb{Y}^*$ for any other alphabet $\mathbb{Y}$, will be
written as $\abs{w}$.

\medskip 

\textbf{The number of words in a~text:} Linguists have observed that
strings of characters that are repeated within the text sufficiently
many times often correspond to whole words or set phrases like
\emph{New York}. A~particularly good correspondence is obtained when
word boundaries are detected using a~grammar-based code that minimizes
the length of a~certain text encoding
\cite{Wolff80,DeMarcken96,KitWilks99,NevillManning96}. For that
reason, the number of words in the formalization of proposition
\usebox{\QLThesis} will be modeled with the number of nonterminal
symbols in such an encoding. Let us present some details of this
construction.

Grammar-based codes compress strings by transforming them first into
special grammars, called admissible grammars \cite{KiefferYang00}, and
then encoding the grammars back into strings according to a~fixed
simple method.  An \emph{admissible} grammar is a~context-free grammar
that generates a~singleton language $\klam{w}$ for some string
$w\in\mathbb{X}^*$ \cite{KiefferYang00}.  The subset of such grammars
will be denoted as $\mathcal{G}(w)$, whereas the set of admissible
grammars for all strings is written as
$\mathcal{G}:=\bigcup_{w\in\mathbb{X}^*} \mathcal{G}(w)$.  If the
string $w$ contains repeated substrings then some grammar in
$\mathcal{G}(w)$ `factors out' the repetitions and represents $w$
concisely.

In an admissible grammar, there is exactly one rule per nonterminal
symbol and the nonterminals can be ordered so that the symbols are
rewritten onto strings of strictly succeeding symbols
\cite{KiefferYang00,CharikarOthers05}.  Hence, such a~grammar is given
by its set of production rules
\begin{align}
\label{FullGrammar}
\mathsf{G}&=   \klam{ 
\begin{array}{l}
A_1\rightarrow\alpha_1, \\
A_2\rightarrow\alpha_2, \\
..., \\
A_n\rightarrow\alpha_n 
    \end{array}
}, 
\end{align}
where $A_1$ is the start symbol, other $A_i$ are secondary
nonterminals, and the right-hand sides of rules satisfy $\alpha_i\in
(\klam{A_{i+1},A_{i+2},...,A_n}\cup\mathbb{X})^*$.  

A~concrete example of an admissible grammar is
\begin{align}
\label{exGi}
  \klam{
    \begin{array}{l}
A_{1} \rightarrow A_{2}A_{2}A_{4}A_{5}\textbf{dear\_children}A_{5}A_{3}\textbf{all.} \\
A_{2} \rightarrow A_{3}\textbf{you}A_{5} \\
A_{3} \rightarrow A_{4}\textbf{\_to\_} \\
A_{4} \rightarrow \textbf{Good\_morning} \\
A_{5} \rightarrow \textbf{,\_}
    \end{array}
  }.
\end{align}
If we start the derivation with the symbol $A_1$ and follow the
rewriting rules, we obtain the text of a~song:
\begin{verse}
  \it
  Good morning to you, \\
  Good morning to you, \\
  Good morning, dear children, \\
  Good morning to all.
\end{verse}
In the compressions of longer texts, nonterminal symbols $A_i$ often
correspond to words or set phrases, especially if it is also required
that the nonterminals were defined as strings of only terminal symbols
\cite{KitWilks99}. The latter kind of grammars will be called
\emph{flat grammars}.

The number of distinct nonterminal symbols in grammar
(\ref{FullGrammar}) will be called the \emph{vocabulary size} of
$\mathsf{G}$ and denoted by
\begin{align}
  \voc{\mathsf{G}}:=\card \klam{A_{1},A_{2},...,A_n} =n
  .
\end{align}
On the other hand, a~function
$\Gamma:\mathbb{X}^+\rightarrow\mathcal{G}$ such that
$\Gamma(w)\in\mathcal{G}(w)$ for all $w\in\mathbb{X}^+$ is called
a~\emph{grammar transform} \cite{KiefferYang00}.  In Subsection
\ref{ssecUpperExcess}, we will construct \emph{admissibly minimal}
grammar transforms, which minimize the length of a~specific text
encoding. The vocabulary size of these transforms will be considered
in the formalization of proposition \usebox{\QLThesis}.  The
definition of admissibly minimal transforms is too technical to
present right here but we may say that it resembles transforms
considered by linguists \cite{DeMarcken96,KitWilks99} and the
transform investigated in \cite{CharikarOthers05}, which we call
Yang-Kieffer minimal. In particular, there exist admissibly minimal
transforms that are flat grammar transforms.

\medskip 

In the second turn, we have to formulate a~model of an infinitely long
text that describes random facts in a~repetitive way.  Both the text
and the set of facts repeatedly described in the text will be modeled
as stochastic processes. We introduce quite a~new mathematical model
of human language so we devote more space for motivation.

\medskip 

\textbf{The model of texts and facts:} Let $(X_i)_{i\in\mathbb{Z}}$ be
a~stochastic process on a~probability space $(\Omega,\mathfrak{J},P)$,
where variables $X_i:\Omega\rightarrow\mathbb{X}$ assume values from
the countable alphabet $\mathbb{X}$. This process will model an
infinitely long text, where $X_i$ are consecutive text units. We can
imagine that the values of $X_i$ are characters if $\mathbb{X}$ is
finite, or words or sentences if $\mathbb{X}$ is infinite. In most
cases in this paper, we will assume that $\mathbb{X}$ is finite.  On
the other hand, let $Z_k:\Omega\rightarrow\klam{0,1}$,
$k\in\mathbb{N}$, be equidistributed IID binary variables. We will
assume that the values of $Z_k$ are a~priori unknown to the reader of
the text $(X_i)_{i\in\mathbb{Z}}$ but can be learned from the text.
Variables $Z_k$ will be called facts. We can imagine that the values
of $Z_k$ are logical values (1=true and 0=false) of certain
systematically enumerated logically independent propositions.

More specifically, we will suppose that each fact $Z_k$ can be
inferred from a~half-infinite text if we start reading it from an
arbitrary position. This is done to incorporate the postulate that
facts are described in the text in a~repetitive way.  In the
following, notation $X_{m:n}:=(X_i)_{m\le i\le n}$ will be used for
strings of $X_i$, also called blocks. Blocks $X_{m:n}$ model finite
texts.  This definition, introduced in \cite{Debowski09}, captures
what we need:
\begin{definition}  
  \label{defiUDP}
  A~stochastic process $(X_i)_{i\in\mathbb{Z}}$ is called
  \emph{strongly nonergodic}\footnote{A not so fortunate name
    \emph{uncountable description process} was used originally in
    \cite{Debowski09}.} if there exists an IID binary process
  $(Z_k)_{k\in\mathbb{N}}$ with $P(Z_k=0)=P(Z_k=1)=\frac{1}{2}$ and
  functions $s_k:\mathbb{X}^*\rightarrow \klam{0,1}$,
  $k\in\mathbb{N}$, such that
  \begin{align}
    \label{UDPcondi}
    \lim_{n\rightarrow\infty} P(s_k(X_{t+1:t+n})=Z_k)&=1
    ,
    &
    &\forall t\in\mathbb{Z},\, \forall k\in\mathbb{N} 
    .
  \end{align}
\end{definition}

The motivation for functions $s_k$ comes from the idea that there is
a~fixed method of interpreting finite texts in natural language to
infer facts, which is known as human language competence in linguistic
jargon.  Thus, any fact that is mentioned in texts in a~repetitive way
can be learned by text readers ultimately, regardless of their
starting point.  The facts that are mentioned repeatedly fall roughly
into two types: (i) facts about the unchangeable objective world,
which can be discovered and reported independently by successive
generations of text creators, and (ii) facts about historical
heritage, which undergo distributed creation, accumulation, and
(partly lossy) transmission from text creators to readers.

Definition \ref{defiUDP} is a~mathematical model that ignores
distinction between these two flavors of facts, except for the
requirement that facts cannot change after their discovery or be
forgotten after their creation. Investigating a~relaxed condition is
planned for a~future publication. An enumeration of independent facts
can also be modeled by the binary expansion of halting probability.
The binary expansion of halting probability is algorithmically random
and represents a~large body of mathematical knowledge in its most
condensed form \cite{Chaitin75,Gardner79}. We suppose, however, that
information relayed by humans in a~repetitive way is mostly unrelated
to this theoretical concept.

From the probabilistic point of view, a~stationary process is strongly
nonergodic if and only if there exists a~continuous random variable
$Y:\Omega\rightarrow(0,1)$ measurable with respect to the
shift-invariant $\sigma$-algebra \cite[Theorem 9]{Debowski09}. Such
a~variable is an example of a~parameter in Bayesian statistics. For
instance, taking $Y=\sum_{k\in\mathbb{N}} 2^{-k} Z_k$ corresponds to
a~uniform prior on $Y$. Theorem 9 from \cite{Debowski09} has a~few
consequences.  Firstly, a~strongly nonergodic process cannot be
ergodic, or IID in particular.  Secondly, it cannot be a~finite-state
hidden Markov process, which is a~kind of processes considered in the
`monkey-typing' explanations of Zipf-Mandelbrot's law.

However, to illustrate how the concept of a~strongly nonergodic
process matches some preconceptions about human communication, let us
consider the following example. It is simple but very different from
parametric models usually considered in statistics.  For a~while, let
the alphabet be $\mathbb{X}=\mathbb{N}\times\klam{0,1}$ and let the
process $(X_i)_{i\in\mathbb{Z}}$ have the form
\begin{align}
  \label{exUDPi}
  X_i:=(K_i,Z_{K_i}),
\end{align}
where $(Z_k)_{k\in\mathbb{N}}$ and $(K_i)_{i\in\mathbb{Z}}$ are
probabilistically independent whereas $(K_i)_{i\in\mathbb{Z}}$ is such
an ergodic stationary process that $P(K_i=k)>0$ for every natural
number $k\in\mathbb{N}$.  Under such assumptions it can be
demonstrated that $(X_i)_{i\in\mathbb{Z}}$ forms a~strongly nonergodic
process.\footnote{In spite of a~few years of acquaintance, I~have not
  found a~plausible scientific name for process (\ref{exUDPi}).
  Probably it should be called simply the Santa Fe process because I
  discovered it during a~visit to the Santa Fe Institute.}

Variables $X_i=(K_i,Z_{K_i})$ can be given such a~linguistic
interpretation: Imagine that $(X_i)_{i\in\mathbb{Z}}$ is a~sequence of
consecutive statements extracted from an infinitely long text that
describes an infinite random object $(Z_k)_{k\in\mathbb{N}}$
consistently.  Each statement $X_i=(k,z)$ reveals both the address $k$
of a~random bit of $(Z_k)_{k\in\mathbb{N}}$ and its value $Z_k=z$.
Logical consistency of the description is reflected in this property:
If two statements $X_i=(k,z)$ and $X_j=(k',z')$ describe bits of the
same address ($k=k'$) then they always assert the same bit value
($z=z'$). Let us note that the pool of facts $(Z_k)_{k\in\mathbb{N}}$
can be viewed either as a~random state of an objective world that
exists prior to the text $(X_i)_{i\in\mathbb{Z}}$ or as historical
heritage that is created on-line and memorized during generation of
consecutive variables $X_i$. Model (\ref{exUDPi}) is indifferent with
respect to either interpretation.

In the formalization of proposition \usebox{\QLThesis}, the number of
facts described in the finite text $X_{1:n}$ will be identified with
the number of $Z_i$'s that may be predicted with probability at least
$\delta$ given $X_{1:n}$. That is, this number will be understood as
the cardinality of set
\begin{align}
\label{Un}
  U_\delta(n):= \klam{k\in\mathbb{N}:
    P\okra{s_k\okra{X_{1:n}}=Z_k}\ge \delta}
  ,
\end{align}
where $\delta>\frac{1}{2}$.  As we will show in Subsection
\ref{ssecExample}, the cardinality of set $U_\delta(n)$ is of order
$n^\beta$ for process (\ref{exUDPi}) if variables $K_i$ are IID and
power-law distributed,
\begin{align}
  \label{ZetaK}
  P(K_i=k)&=k^{-1/\beta}/\zeta(\beta^{-1})
  ,
  &
  \beta&\in(0,1)
  ,
\end{align}
where $\zeta(x)=\sum_{k=1}^\infty k^{-x}$ is the zeta function.  In
contrast, it can be seen that the cardinality of $U_\delta(n)$ is of
order $\log n$ if $(X_i)_{i\in\mathbb{Z}}$ is a~Bernoulli process with
a~random parameter $Y=\sum_{k\in\mathbb{N}} 2^{-k} Z_k$. Note that the
cardinality of $U_\delta(n)$ for a~given process depends, to a~certain
extent, on the choice of functions $s_k$ and facts $Z_k$.  The
formalization of proposition \usebox{\QLThesis} holds, however,
for any choice of $s_k$ and $Z_k$ as long as (\ref{UDPcondi}) is
satisfied.

\medskip

Now we can approach the main result.  Let $\sred{}$ be the expectation
operator and let $\card{A}$ be the cardinality of a~set $A$.  We also
use this concept from \cite{Shields97}:
\begin{definition}
  A~process $(X_i)_{i\in\mathbb{Z}}$ is called
  a~\emph{finite-energy process} if
  \begin{align}
    \label{FE}
    P(X_{t+\abs{w}+1:t+\abs{wu}}=u|X_{t+1:t+\abs{w}}=w)\le Kc^{\abs{u}}
  \end{align}
  for all $t\in\mathbb{Z}$, all $u,w\in\mathbb{X}^*$, and certain
  constants $c<1$ and $K$, as long as $P(X_{t+1:t+\abs{w}}=w)>0$.
\end{definition}
It can be easily seen that stationary finite-energy processes have
a~positive entropy rate. Moreover, condition (\ref{FE}) is satisfied
for processes dithered with an IID noise \cite{Shields97}---so it
seems reasonable in modeling natural language.

Our formalization of proposition \usebox{\QLThesis} takes the
following form:
\begin{theorem}
  \label{theoQLThesis}
  Let $(X_i)_{i\in\mathbb{Z}}$ be a stationary finite-energy strongly
  nonergodic process over a~finite alphabet $\mathbb{X}$. Assume that
  inequality
  \begin{align}
    \label{QLPremise}
    \liminf_{n\rightarrow\infty} \frac{\card{U_\delta(n)}}{n^\beta}>0
  \end{align}
  holds for some $\beta\in (0,1)$, $\delta\in(\frac{1}{2},1)$, and
  sets (\ref{Un}) where functions $s_{k}$ satisfy (\ref{UDPcondi}).
  Then
  \begin{align}
    \label{QLClaim}
    \limsup_{n\rightarrow\infty} 
    \sred{\okra{\frac{
          \voc{\Gamma(X_{1:n})}
        }{
          n^\beta(\log n)^{-1}
        }}^p}>0
    ,
    \quad
    p>1
    ,
  \end{align}
  for any admissibly minimal grammar transform
  $\Gamma:\mathbb{X}^+\rightarrow\mathcal{G}$.
\end{theorem}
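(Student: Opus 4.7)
My plan is to sandwich the block-mutual information
\begin{equation*}
E(n):=2H(X_{1:n})-H(X_{1:2n})=I(X_{1:n};X_{n+1:2n})
\end{equation*}
between a~probabilistic lower bound of order $n^\beta$, coming from the facts $(Z_k)$, and a~coding-theoretic upper bound of order $\sred{\voc{\Gamma(X_{1:2n})}}\log n$, coming from admissible minimality of $\Gamma$. Combining the two estimates and invoking Jensen's inequality will then yield~(\ref{QLClaim}).

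\textbf{Lower bound on $E(n)$.} First, for every $k\in U_\delta(n)$ we have $P(s_k(X_{1:n})=Z_k)\ge\delta$, so Fano's inequality gives $H(Z_k\mid X_{1:n})\le h(1-\delta)$ where $h$ denotes binary entropy. Since the $Z_k$ are IID and uniform on $\klam{0,1}$, subadditivity of conditional entropy will yield
\begin{equation*}
I(X_{1:n};(Z_k)_{k\in U_\delta(n)})\ge (1-h(1-\delta))\card{U_\delta(n)}.
\end{equation*}
Second, by stationarity of $(X_i)$ together with the shift-invariance of the $Z_k$ available through Theorem~9 of \cite{Debowski09}, the analogous bound will hold with $X_{n+1:2n}$ replacing $X_{1:n}$. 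Third, I would apply a~data-processing argument to the pair of estimators $(s_k(X_{1:n}),s_k(X_{n+1:2n}))$ for those $k$ in the intersection of the two predictability sets (still of size $\gtrsim n^\beta$ by (\ref{QLPremise})); this converts the two one-sided predictions into $E(n)\gtrsim n^\beta$.

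\textbf{Upper bound on $E(n)$.} The admissibly minimal grammar transform $\Gamma$ to be constructed in Section~\ref{ssecUpperExcess} will come equipped with a~prefix code $C$ whose pointwise length admits a~structural bound of the form $\abs{C(\mathsf{G})}\le c\voc{\mathsf{G}}\log(\voc{\mathsf{G}}+\abs{\mathbb{X}})+\ell_{\text{term}}(\mathsf{G})$. I would then observe that the grammar of $X_{1:2n}$ does no worse than the concatenation of the grammars of the two halves, while Kraft's inequality $\sred{\abs{C(\cdot)}}\ge H(\cdot)$ pushes in the opposite direction; the finite-energy assumption~(\ref{FE}) is used to control the expectation of the code length against its tails. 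These three ingredients should combine to give
\begin{equation*}
E(n)\le C\sred{\voc{\Gamma(X_{1:2n})}}\log n + o(n^\beta).
\end{equation*}

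\textbf{Conclusion and main obstacle.} Dividing the two estimates produces $\sred{\voc{\Gamma(X_{1:2n})}}\gtrsim n^\beta/\log n$ along an unbounded sequence of $n$, and Jensen's inequality $\sred{X^p}\ge(\sred{X})^p$ for $X\ge 0$ and $p\ge 1$ transfers this $L^1$ bound into (\ref{QLClaim}) for every $p>1$. The Fano half is essentially bookkeeping; the hard part will be the upper bound on $E(n)$, which rests on the careful design of an admissibly minimal transform and on a~sharp inequality linking its vocabulary to the mutual information between halves of the text. The finite-energy hypothesis is precisely the tool that should prevent rare long-range dependencies from spoiling the averaged behaviour in that inequality.
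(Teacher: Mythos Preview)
Your plan to sandwich $E(n)=I(X_{1:n};X_{n+1:2n})$ has a structural gap on the upper side. The pointwise inequality that admissible minimality actually gives is
\[
\abs{C(X_{1:n})}+\abs{C(X_{n+1:2n})}-\abs{C(X_{1:2n})}
\le W_0\,\voc{\Gamma(X_{1:2n})}\bigl(1+\mathbf{L}(X_{1:2n})\bigr),
\]
which bounds the \emph{code-length} excess $E^C(n):=2H^C(n)-H^C(2n)$, not the entropy excess $E(n)$. To pass from $E^C$ to $E$ you would need $H^C(m)-H(m)=o(m^\beta)$, but universality of $C$ only yields $o(m)$; this is exactly why the paper remarks that Theorem~\ref{theoQLThesis} does not follow by chaining Theorems~\ref{theoQLThesisA} and~\ref{theoQLThesisB}. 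The paper's fix is to abandon $E(n)$ as the pivot and use instead the sandwich $H^C(n)\ge H(n)\ge H^U(n):=hn+(\log 2-\eta(\delta))\card{U_\delta(n)}$, all three with the same linear rate $h$. A~short calculus lemma (Appendix~\ref{ssecExcessBound}) then converts $\liminf_n (H^U(n)-hn)/n^\beta>0$ directly into $\limsup_n E^C(n)/n^\beta>0$, skipping $E(n)$ entirely; this is also why only a $\limsup$ appears in~(\ref{QLClaim}).

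There is a second gap in your upper bound: the factor multiplying the vocabulary is not a deterministic $\log n$ but the \emph{random} longest-repeat length $1+\mathbf{L}(X_{1:2n})$. Separating the product $\voc{\Gamma(X_{1:2n})}\cdot(1+\mathbf{L}(X_{1:2n}))$ requires H\"older's inequality, not Jensen, and then one needs $\sup_n \sred{(\mathbf{L}(X_{1:2n})/\log n)^q}<\infty$; \emph{this} is where the finite-energy hypothesis enters (Lemma~\ref{theoFEL}), not as a tail bound on code length. Your Fano lower bound is essentially right for the inequality $H(n)\ge H^U(n)$, but note that a direct data-processing bound of the form $E(n)\ge c\,\card{U_\delta(n)}$ would need $\log 2-2\eta(\delta)>0$, which fails for $\delta$ close to $\tfrac12$; the route through $H^U$ and the excess-bounding lemma works for all $\delta\in(\tfrac12,1)$.
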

As we will see in Subsection \ref{ssecExample}, an example of
a~process over a~finite alphabet that satisfies the premise of Theorem
\ref{theoQLThesis} can be constructed by stationary coding of the
process (\ref{exUDPi}) with $K_i$ satisfying (\ref{ZetaK}), cf.\
\cite{Debowski10}.  

Theorem \ref{theoQLThesis} is closely related to two propositions
pertaining to mutual information between two adjacent blocks. For
a~discrete stationary process $(X_i)_{i\in\mathbb{Z}}$, let us define
the \emph{$n$-symbol entropy}
\begin{align}
  \label{Hn}
  H(n):=H(X_{t+1:t+n})
  =-\sred{\log P(X_{t+1:t+n})}
  ,
\end{align}
where $\log$ is the natural logarithm.  Denote the block mutual
information as
\begin{align}
  \label{En}
  E(n)&:=2H(n)-H(2n)
  =
  I(X_{1:n};X_{n+1:2n}),
\end{align}
called the \emph{$n$-symbol excess entropy} in
\cite{CrutchfieldFeldman03}. $E(n)$ is a~convenient measure of
long-range dependence in discrete-valued processes. We have:
\begin{theorem}
  \label{theoQLThesisA}
  Let $(X_i)_{i\in\mathbb{Z}}$ be a stationary strongly nonergodic
  process over a~finite alphabet $\mathbb{X}$.  Assume that inequality
  (\ref{QLPremise}) holds for some $\beta\in (0,1)$,
  $\delta\in(\frac{1}{2},1)$, and sets (\ref{Un}) where functions
  $s_{k}$ satisfy (\ref{UDPcondi}). Then
\begin{align}
  \label{QLClaimA}
  \limsup_{n\rightarrow\infty} \frac{E(n)}{n^\beta}>0
  .
\end{align}
\end{theorem}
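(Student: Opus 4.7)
The strategy is to use the facts $Z_k$ as a proxy for the shared information between $X_{1:n}$ and $X_{n+1:2n}$: if $X_{1:n}$ contains information about many $Z_k$'s, then so does $X_{n+1:2n}$ by stationarity, and two imperfect views of the same bits must translate into substantial mutual information between the two halves.

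I would first establish that each $Z_k$ is shift-invariant. By (\ref{UDPcondi}) applied at every integer $t$, $Z_k$ is measurable with respect to the tail $\sigma$-algebra of $(X_i)$, hence agrees almost surely with a shift-invariant random variable. In particular $(X_{t+1:t+n},Z_k)$ has the same joint distribution as $(X_{1:n},Z_k)$ for every $t$, so $k\in U_\delta(n)$ implies $P(s_k(X_{n+1:2n})=Z_k)\ge\delta$ as well. Set $S=U_\delta(n)$ and $\vec Z_S=(Z_k)_{k\in S}$.

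The workhorse is the elementary inequality
\[
I(A;B)\ge H(Z)-H(Z\mid A)-H(Z\mid B),
\]
which follows from $H(A\mid B)\le H(Z\mid B)+H(A\mid Z)$. Taking $A=X_{1:n}$, $B=X_{n+1:2n}$, and $Z=\vec Z_S$, the IID uniformity of $(Z_k)$ gives $H(\vec Z_S)=|S|\log 2$, while sub-additivity of conditional entropy together with Fano's inequality yields $H(\vec Z_S\mid X_{1:n})\le\sum_{k\in S}H(Z_k\mid X_{1:n})\le |S|\,h(1-\delta)$, with the same bound on the second half by the shift-invariance above. These ingredients assemble into
\[
E(n)\ge |U_\delta(n)|\bigl(\log 2-2h(1-\delta)\bigr),
\]
which combined with the hypothesis $|U_\delta(n)|\gtrsim n^\beta$ gives $\limsup_n E(n)/n^\beta>0$ whenever $\log 2>2h(1-\delta)$.

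The main obstacle is that the bracket $\log 2-2h(1-\delta)$ is positive only for $\delta$ above a threshold $\delta_0<1$, while the hypothesis permits any $\delta>\tfrac12$. To close this gap I would boost the predictors: for each $k\in U_\delta(n)$, (\ref{UDPcondi}) yields $P(s_k(X_{1:m})=Z_k)\to 1$, and replacing $s_k$ by the MAP estimator at length $m$ produces a family of sets $\widetilde U_{\delta'}(m)$ that is monotone in $m$ and exhausts $\mathbb{N}$ for every $\delta'<1$. Finiteness of the alphabet makes each $\widetilde U_{\delta'}(m)$ finite, and the delicate step is to extract, by a diagonal argument, a subsequence $m_j=O(n_j)$ along which $|\widetilde U_{\delta'}(m_j)|\gtrsim m_j^\beta$ for some $\delta'>\delta_0$; applying the previous paragraph to the pair $(\delta',m_j)$ then closes the argument.
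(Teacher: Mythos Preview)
Your direct bound $E(n)\ge |U_\delta(n)|\bigl(\log 2-2\eta(1-\delta)\bigr)$ is correct and is a genuinely different route from the paper's: the paper never bounds $E(n)$ pointwise. Instead it bounds the \emph{block entropy} from below,
\[
H(n)\ \ge\ hn+\bigl[\log 2-\eta(\delta)\bigr]\,\card{U_\delta(n)},
\]
using only \emph{one} Fano term (because only one copy of $X_{1:n}$ has to be compared with the facts $Z_k$), and then transfers this to $E(n)=2H(n)-H(2n)$ by a purely analytic ``excess-bounding'' lemma on subadditive-type sequences. That is why the paper's constant $\log 2-\eta(\delta)$ is positive for \emph{every} $\delta>\tfrac12$, whereas your constant $\log 2-2\eta(\delta)$ is positive only above a threshold $\delta_0\approx 0.89$.

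The gap is precisely the boosting step you flag as ``delicate''. The diagonal extraction you sketch is not guaranteed to succeed: knowing that each $k$ eventually enters $\widetilde U_{\delta'}(m)$ says nothing about the \emph{rate} at which this happens. It is entirely consistent with the hypotheses that $k$ becomes $\delta$-predictable at scale $n\asymp k^{1/\beta}$ (so $|U_\delta(n)|\asymp n^\beta$) while $k$ becomes $\delta'$-predictable only at scale $m\asymp 2^k$ (so $|\widetilde U_{\delta'}(m)|\asymp\log m$). Nor does majority voting over shifted blocks rescue you: the estimates $s_k(X_{jn+1:(j+1)n})$ are not conditionally independent given $Z_k$, and the ergodic theorem only gives convergence of their average to $P(s_k(X_{1:n})=Z_k\,\|\,\mathfrak{F})$, about which you have no almost-sure lower bound. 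So for $\delta\le\delta_0$ your argument, as written, does not close.

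The fix is exactly what the paper does: drop the attempt to lower-bound $E(n)$ directly, prove instead $H(n)-hn\ge c\,n^\beta$ via a single Fano/data-processing step against the shift-invariant $\sigma$-algebra (your observation that $Z_k$ is $\mathfrak F$-measurable is the right ingredient here), and then use the telescoping identity
\[
\sum_{k\ge 0}\frac{2H(2^k)-H(2^{k+1})}{2^{k+1}}=H(1)-h
\]
(or its variants) to force $2H(n)-H(2n)\ge 0$ along a subsequence after subtracting $hn+cn^\beta$. This avoids the factor $2$ altogether and works for the full range $\delta\in(\tfrac12,1)$.
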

\begin{theorem}
  \label{theoQLThesisB}
  Let $(X_i)_{i\in\mathbb{Z}}$ be a stationary finite-energy process
  over a~finite alphabet $\mathbb{X}$.  Assume that inequality
  \begin{align}
    \label{QLPremiseB}
    \liminf_{n\rightarrow\infty} \frac{E(n)}{n^\beta}>0
  \end{align}
  holds for some $\beta\in(0,1)$. Then we have (\ref{QLClaim}) for any
  admissibly minimal grammar transform
  $\Gamma:\mathbb{X}^+\rightarrow\mathcal{G}$.
\end{theorem}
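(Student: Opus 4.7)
The plan is to prove Theorem \ref{theoQLThesisB} by establishing the key inequality
\begin{align*}
E(n) \leq C\,\sred{\voc{\Gamma(X_{1:2n})}}\log(2n) + R(2n),
\end{align*}
where $R(n)$ is a universality-redundancy term that can be controlled to be of smaller order than $n^\beta$ under the finite-energy hypothesis, and then inverting this bound and applying Jensen's inequality.

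The central technical step is a \emph{grammar-splitting lemma}: given the admissibly minimal grammar $\Gamma(X_{1:2n})$ of vocabulary size $V$, I would produce two admissible grammars generating $X_{1:n}$ and $X_{n+1:2n}$, respectively, whose combined encoding length exceeds the length of the encoding of $\Gamma(X_{1:2n})$ by at most $CV\log(2n)$. The intuition is that each of the $V$ nonterminals of the joint grammar may have to be described in both separated grammars, and each redescription costs $O(\log n)$ bits. Combined with the source-coding inequality $H(k)\leq \sred{L(\Gamma(X_{1:k}))}$ applied to each half, together with the universality estimate $\sred{L(\Gamma(X_{1:2n}))}\leq H(2n)+R(2n)$ for admissibly minimal grammar codes, this lemma yields
\begin{align*}
2H(n) &\leq \sred{L(\Gamma(X_{1:n}))}+\sred{L(\Gamma(X_{n+1:2n}))} \\
&\leq \sred{L(\Gamma(X_{1:2n}))} + C\,\sred{\voc{\Gamma(X_{1:2n})}}\log(2n) \\
&\leq H(2n)+R(2n)+C\,\sred{\voc{\Gamma(X_{1:2n})}}\log(2n),
\end{align*}
and rearranging delivers the target bound on $E(n)=2H(n)-H(2n)$. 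The finite-energy condition is invoked here to guarantee, via Shields-type estimates for finite-energy processes, that $R(n)=o(n^\beta)$.

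Finally, combining the target inequality with the hypothesis $\liminf_n E(n)/n^\beta>0$ gives $\sred{\voc{\Gamma(X_{1:2n})}}\geq c\,n^\beta/\log n$ for all sufficiently large $n$. By Jensen's inequality applied to the convex map $x\mapsto x^p$ for $p>1$,
\begin{align*}
\sred{\okra{\voc{\Gamma(X_{1:2n})}}^p}\geq \okra{\sred{\voc{\Gamma(X_{1:2n})}}}^p \geq c^p\,\frac{n^{p\beta}}{(\log n)^p},
\end{align*}
so the normalized $p$-th moment appearing in (\ref{QLClaim}) is bounded below by a positive constant along the subsequence of even indices, which is enough for the $\limsup$. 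The main obstacle I anticipate is the grammar-splitting lemma itself, specifically verifying that the two extracted sub-grammars are genuinely admissible and that the rule-reference bookkeeping matches the claimed $O(V\log n)$ overhead; controlling $R(n)=o(n^\beta)$ under finite-energy is a secondary but more routine concern, likely reducing to bounds of the kind Shields establishes for such processes.
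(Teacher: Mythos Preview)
Your overall architecture---split the grammar for $X_{1:2n}$, compare code lengths via source coding and universality---is the paper's as well, but two of your steps do not go through as stated, and the paper's actual argument is designed precisely to work around them.

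\textbf{The splitting overhead is not $O(V\log n)$ pointwise.} When you cut the admissibly minimal grammar $\Gamma(X_{1:2n})$ into grammars for the two halves, the extra cost is governed by the size of the secondary rules, and the right-hand side of a secondary rule can have length up to $\mathbf{L}(X_{1:2n})$, the maximal repeat length. The paper's pointwise bound (Theorem~\ref{theoUpper}(i)--(ii)) is
\[
|C(u)|+|C(v)|-|C(uv)|\le W_0\,\voc{\Gamma(uv)}\bigl(1+\mathbf{L}(uv)\bigr),
\]
and $\mathbf{L}(X_{1:2n})$ is \emph{not} deterministically $O(\log n)$: for a~string such as $a^{2n}$ it equals $2n-1$. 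Only the finite-energy hypothesis, via the longest-repeat bound (Lemma~\ref{theoFEL}), gives $\sup_n\sred{(\mathbf{L}(X_{1:2n})/\log n)^q}<\infty$. Since $\voc{\Gamma(X_{1:2n})}$ and $\mathbf{L}(X_{1:2n})$ are both random and not independent, you cannot replace $\sred{V\cdot\mathbf{L}}$ by $\sred{V}\cdot O(\log n)$; the paper applies H\"older to get $\sred{V\cdot\mathbf{L}}\le(\sred{V^p})^{1/p}(\sred{\mathbf{L}^q})^{1/q}$, and this---not Jensen after a first-moment bound---is why the $p$-th moment appears in~(\ref{QLClaim}).

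\textbf{The redundancy is only $o(n)$, not $o(n^\beta)$.} Your chain needs $R(2n)=H^C(2n)-H(2n)=o(n^\beta)$, but the universality result for admissibly minimal codes (Theorem~\ref{theoUni} and the remark following it) yields only $H^C(n)/n\to h$, i.e.\ $R(n)=o(n)$. Shields' finite-energy estimates concern the maximal repeat, not pointwise code redundancy, and no $o(n^\beta)$ bound is established here. The paper avoids the issue by not comparing $E(n)$ to the vocabulary directly: it introduces $E^C(n):=2H^C(n)-H^C(2n)$ and invokes the excess-bounding lemma of Appendix~\ref{ssecExcessBound}, which from $H^C(n)\ge H(n)$ and equality of the rates $\lim_n H^C(n)/n=\lim_n H(n)/n$ alone extracts
\[
\liminf_{n\to\infty}\frac{E(n)}{n^\beta}>0\ \Longrightarrow\ \limsup_{n\to\infty}\frac{E^C(n)}{n^\beta}>0.
\]
The splitting bound plus H\"older then give $\limsup_n E^C(n)/n^\beta>0\Rightarrow\limsup_n\sred{S_n^p}>0$. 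This detour through $E^C$ is the missing idea; it is also why the conclusion is only a $\limsup$, not the stronger ``for all sufficiently large $n$'' your route would deliver if $R(n)=o(n^\beta)$ actually held.
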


Although Theorem \ref{theoQLThesis} does not follow from Theorems
\ref{theoQLThesisA} and \ref{theoQLThesisB}, we will give almost
a~simultaneous proof of all three propositions. A~heuristic proof of
Theorem \ref{theoQLThesisB} was outlined in \cite{Debowski06}.  This
paper provides the formal proof and develops a~discussion of the
logically earlier Theorem \ref{theoQLThesisA}. Because of space
constraints, we do not discuss hypothetical extensions of Theorem
\ref{theoQLThesis} such as strong laws.

The proper discussion of the linguistic relevance of our results is
also beyond the scope of this paper and will be presented in later
publications. However, let us note that the conjecture $E(n)\propto
n^\beta$ was raised for natural language by Hilberg \cite{Hilberg90}.
This was his interpretation of the graph of conditional entropy in
Shannon's seminal paper \cite{Shannon50} and he supposed that
$\beta\approx \frac{1}{2}$. This conjecture is little known among
linguists but has evoked a~discussion about `statistical complexity'
among physicists \cite{EbelingNicolis91,
  EbelingPoschel94,
  BialekNemenmanTishby01b, CrutchfieldFeldman03}.  In our opinion,
Theorem \ref{theoQLThesisA} demonstrates that Hilberg's hypothesis can
be motivated rationally, whereas Theorem \ref{theoQLThesisB} shows
that the hypothesis implies certain empirical regularities.  The
initial quotation from G.\ K.\ Zipf matches these results
qualitatively since processes with $E(n)\propto n^\beta$ differ from
both regular oscillations and memoryless noise. Indeed, our
preliminary experiments indicate that the vocabulary size of
admissibly minimal codes is much larger for texts in natural language
than for memoryless sources \cite{Debowski07d}.

The further composition of this paper is as follows:
Section~\ref{secMain} contains the proof of Theorems
\ref{theoQLThesis}, \ref{theoQLThesisA}, and \ref{theoQLThesisB}.
In Section~\ref{secCodes}, we define admissibly minimal grammar
transforms and build a~new class of universal grammar-based codes
associated with those transforms.  Section~\ref{secUDP} is a~study of
nonergodic stationary processes. It contains two results used to prove
Theorem \ref{theoQLThesisA} and an exposition of a~process that
satisfies the assumption of Theorem \ref{theoQLThesis}.  The article
is supplemented with two appendices.  In Appendix
\ref{ssecExcessBound}, we bound the expression $2G(n)-G(2n)$ for
a~nonnegative function $G$ that has a~vanishing linear rate of
growth. Appendix \ref{ssecMaximalRepeat} provides an upper bound for
the expected length of a~repeat in a~block sampled from
a~finite-energy process.

\section{The proof of Theorems \ref{theoQLThesis}--\ref{theoQLThesisB}}
\label{secMain}

The proof rests on several intermediate results developed later in
this paper.  Let $H(n)$ be the $n$-symbol  entropy of the
stationary process $(X_i)_{i\in\mathbb{Z}}$, defined in (\ref{Hn}).
The entropy rate
\begin{align}
\label{h}
  h:=\inf_{n\in\mathbb{N}} H(n)/n=\lim_{n\rightarrow\infty} H(n)/n
\end{align}
is another important parameter of the process \cite{CoverThomas91}.

Consecutively, we will use a~convenient shorthand
\begin{align}
  \label{HUn}
  H^U(n):=hn+\kwad{\log 2 -\myeta(\delta)} \cdot \card{U_\delta(n)}
  ,
\end{align}
where $\myeta(p)$ denotes the entropy of distribution $(p,1-p)$,
\begin{align}
  \label{BinaryEntropy}
  \myeta(p) &:= -p\log p-(1-p)\log (1-p)
  .
\end{align}
By Theorem \ref{theoUDPVoc} from Subsection \ref{ssecFacts}, we have
\begin{align}
  \label{HUineq}
  H(n)\ge H^U(n)
\end{align}
and
\begin{align}
  \label{HUrates}
  \lim_{n\rightarrow\infty} H(n)/n=h=
  \lim_{n\rightarrow\infty} H^U(n)/n
\end{align}
for a~strongly nonergodic process over a~finite alphabet.

Consider next the $n$-symbol excess entropy $E(n)=2H(n)-H(2n)$. From
(\ref{HUineq}) and (\ref{HUrates}) we obtain
\begin{align}
  \label{DiffEUBeta}
  \liminf_{n\rightarrow\infty} \frac{\card{U_\delta(n)}}{n^\beta}>0
  \implies 
  \limsup_{n\rightarrow\infty} \frac{E(n)}{n^\beta}>0
\end{align}
as an instance of implication (\ref{DiffFGBeta}) from Appendix
\ref{ssecExcessBound}. This proves Theorem \ref{theoQLThesisA}.

Now let us proceed to prove the claims that involve grammar
transforms. For an admissibly minimal grammar transform $\Gamma$, let
$C=B(\Gamma(\cdot)):\mathbb{X}^+\rightarrow \mathbb{Y}^+$,
$\mathbb{Y}=\klam{0,1,...,D_Y-1}$, be the associated grammar-based
code, defined in Subsection \ref{ssecUpperExcess} (Definition
\ref{defiAdmissibly}). Denote the expected length of the code $C$ as
\begin{align}
  \label{HCn}
  H^C(n)&:=\sred{\abs{C(X_{1:n})}} \log D_Y .
\end{align}
This code is uniquely decodable (i.e., its extension
$C^*:(u_1,...,u_k)\mapsto C(u_1)...C(u_k)$ is an injection), so we
have the source coding inequality
 \begin{align}
  \label{HCineq}
  H^C(u)\ge H(n)
  .
\end{align}
Moreover, by Theorem \ref{theoUni} from Subsection
\ref{ssecUniversal}, code $C$ is nearly universal, i.e.,
\begin{align}
  \label{HCrates}
  \lim_{n\rightarrow\infty} H^C(n)/n=h=
  \lim_{n\rightarrow\infty} H(n)/n
\end{align}
for any stationary finite-energy process over a~finite alphabet.

Consider the expected excess length $E^C(n):=2H^C(n)-H^C(2n)$ of the
code $C$. Relations (\ref{HCineq}) and (\ref{HCrates}) yield
\begin{align}
  \label{DiffECEBeta}
  \liminf_{n\rightarrow\infty} \frac{E(n)}{n^\beta}>0
  \implies 
  \limsup_{n\rightarrow\infty} \frac{E^C(n)}{n^\beta}>0
\end{align}
as an instance of implication (\ref{DiffFFBeta}) from Appendix
\ref{ssecExcessBound}.

Moreover, for a~stationary finite-energy strongly nonergodic process
over a~finite alphabet there holds a~double inequality
\begin{align}
  \label{HIIIineq}
  H^C(u)\ge H(n)\ge H^U(n)
\end{align}
and an equality of rates
\begin{align}
  \label{HIIIrates}
  \lim_{n\rightarrow\infty} H^C(n)/n=
  \lim_{n\rightarrow\infty} H(n)/n=
  \lim_{n\rightarrow\infty} H^U(n)/n
  .
\end{align}
Using implication (\ref{DiffFGBeta}) again, relations (\ref{HIIIineq})
and (\ref{HIIIrates}) yield respectively
\begin{align}
 \label{DiffECUBeta}
  \liminf_{n\rightarrow\infty} \frac{\card{U_\delta(n)}}{n^\beta}>0
  \implies 
  \limsup_{n\rightarrow\infty} \frac{E^C(n)}{n^\beta}>0
  .
\end{align}

To upper-bound the excess length of the code in terms of the
vocabulary size, denote the maximal length of a~(possibly overlapping)
repeat in $w$ as
\begin{align}
\label{DefL}
  \mathbf{L}(w)&:=\max 
  \klam{|s|: w=x_1sy_1=x_2sy_2 \land x_1\ne x_2}
  ,
\end{align}
where $s,x_i,y_i\in\mathbb{X}^*$.  Then, by Theorem
\ref{theoUpper}(i)--(ii) from Subsection \ref{ssecUpperExcess}, we
have
\begin{align}
\label{UpperVocII}
\hspace{-0.5em}
\abs{C(u)}+\abs{C(v)}-\abs{C(w)}
&\le  W_0\voc{\Gamma(w)}(1+\mathbf{L}(w))
\end{align}
for $w=uv$ and a~certain constant $W_0$.  In the following, define
\begin{align*}
  S_n&:=\voc{\Gamma(X_{1:2n})}n^{-\beta}\log n
  ,
  \\
  T_n&:=(1+\mathbf{L}(X_{1:2n})) (\log n)^{-1}
  .
\end{align*} 
Inequality (\ref{UpperVocII}) and H\"older's inequality yield
\begin{align*}
  E^C(n)\, n^{-\beta} 
  \le W_0\, \sred{S_n T_n}
  \le W_0 (\sred{S_n^p})^{1/p} (\sred{T_n^q})^{1/q}
\end{align*}
for $p,q>1$ such that $(p-1)(q-1)=1$. 

Since $\sred{T_n^q}$ are bounded above for a~finite-energy process by
Lemma \ref{theoFEL} from Appendix \ref{ssecMaximalRepeat},
consecutively we have
\begin{align}
  \label{DiffVLECBeta}
  \limsup_{n\rightarrow\infty} \frac{E^C(n)}{n^\beta}>0
  \implies 
  \limsup_{n\rightarrow\infty} \sred{S_n^p}>0
  .
\end{align}
Theorem \ref{theoQLThesis} follows from propositions
(\ref{DiffECUBeta}) and (\ref{DiffVLECBeta}), whereas Theorem
\ref{theoQLThesisB} is implied by propositions (\ref{DiffECEBeta}) and
(\ref{DiffVLECBeta}).


\section{Grammar-based codes}
\label{secCodes}

For the set of admissible grammars $\mathcal{G}$,
a~\emph{grammar-based code} is a~uniquely decodable code of form
$C=B(\Gamma(\cdot)):\mathbb{X}^+\rightarrow \mathbb{Y}^+$, where
$\Gamma:\mathbb{X}^+\rightarrow\mathcal{G}$ is a~(string-to-)grammar
transform and $B:\mathcal{G}\rightarrow\mathbb{Y}^+$ is called
a~\emph{grammar(-to-string) encoder} \cite{KiefferYang00}. To
guarantee existence of universal codes of this form, we will assume
in this section that both the input and output alphabets are finite,
$\mathbb{X}=\klam{0,1,...,{D_X}-1}$ and
$\mathbb{Y}=\klam{0,1,...,{D_Y}-1}$ in particular.

We are interested in finding a~class of nearly universal grammar-based
codes for which the excess code length
\begin{align}
  \label{ExcessCode}
  \abs{C(u)}+\abs{C(v)} -\abs{C(uv)}
\end{align}
can be bounded by the vocabulary size $\voc{\Gamma(w)}$ and the
maximal length (\ref{DefL}) of a~repeat in $w=uv$.  Let us note that
a~similar bound can be obtained for the excess grammar length
\begin{align}
  \label{ExcessGrammar}
  \abs{\Gamma(u)}+\abs{\Gamma(v)} -\abs{\Gamma(uv)}
\end{align}
of certain grammar transforms. In this expression, the
\emph{Yang-Kieffer length} of an admissible grammar is defined as
\begin{align}
\label{YKlength}
\abs{\mathsf{G}}:=\textstyle\sum_i \abs{\alpha_i}
\end{align}
for grammar (\ref{FullGrammar}) 
\cite{KiefferYang00}. We have:
\begin{theorem}
\label{theoMinYK}
Let $\Gamma$ be a~\emph{Yang-Kieffer minimal grammar transform}, i.e.,
\begin{align}
\label{MinYKG}
  \abs{\Gamma(w)}&=\min_{\mathsf{G}\in\mathcal{G}(w)} |\mathsf{G}|
\end{align}
and $\alpha_i\not=\lambda$ for any secondary rule
$A_i\rightarrow\alpha_i$ in $\Gamma(w)$.
For any strings $u,v\in\mathbb{X}^*$ and $w=uv$ we have
\begin{align}
\label{MinYK}
  0\le
  \abs{\Gamma(u)}+\abs{\Gamma(v)}-\abs{\Gamma(w)}\le
  \voc{\Gamma(w)}\mathbf{L}(w)
  .
\end{align}
\end{theorem}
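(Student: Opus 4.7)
The plan is to prove the two inequalities by complementary grammar constructions. For the lower bound I combine the minimal grammars for $u$ and $v$ into an admissible grammar for $w=uv$ of length exactly $|\Gamma(u)|+|\Gamma(v)|$: I rename their nonterminal sets to be disjoint, introduce a fresh start symbol $A_1$ with rule $A_1\to\alpha_1^u\alpha_1^v$ concatenating the two old start right-hand sides, and retain all secondary rules from both grammars. The resulting grammar is admissible, generates $\klam{w}$, and has Yang--Kieffer length $|\Gamma(u)|+|\Gamma(v)|$, so Yang--Kieffer minimality of $\Gamma(w)$ gives $|\Gamma(w)|\le|\Gamma(u)|+|\Gamma(v)|$.

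For the upper bound I split $\Gamma(w)$ along its derivation tree at position $|u|$. Tracing from the root and descending at each level into the unique child whose expansion straddles the split produces a spine of nonterminals $A_1=A_{j_1},A_{j_2},\ldots,A_{j_m}$; for $k<m$ I write $\alpha_{j_k}=\beta_{j_k}A_{j_{k+1}}\gamma_{j_k}$, and $\alpha_{j_m}=\beta_{j_m}\gamma_{j_m}$ with the split falling on a symbol boundary of $\alpha_{j_m}$. I form a grammar $G_u\in\mathcal{G}(u)$ whose start rule $A_1^u\to\beta_{j_1}\beta_{j_2}\cdots\beta_{j_m}$ inlines the left halves of the spine, and which contains a copy of every rule $A_i\to\alpha_i$ of $\Gamma(w)$ whose left-hand side becomes reachable from $A_1^u$ in the resulting rule system; I construct $G_v$ symmetrically from the $\gamma_{j_k}$'s. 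By minimality of $\Gamma(u)$ and $\Gamma(v)$, $|\Gamma(u)|+|\Gamma(v)|\le|G_u|+|G_v|$, so it suffices to show $|G_u|+|G_v|-|\Gamma(w)|\le\voc{\Gamma(w)}\mathbf{L}(w)$.

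The crucial ingredient is that for every secondary nonterminal $A_i$ of a Yang--Kieffer minimal grammar, the length of $\alpha_i$ is at most the length of its expansion, which in turn is at most $\mathbf{L}(w)$: the first inequality holds because each symbol of $\alpha_i$ produces at least one terminal, and the second because minimality forces $A_i$ to be used at least twice in the derivation of $w$ (otherwise inlining its single occurrence would save one symbol), so its expansion is a repeat in $w$. Breaking $|G_u|+|G_v|-|\Gamma(w)|$ into rule-by-rule contributions, the only positive terms come from rules of $\Gamma(w)$ whose left-hand side is reachable from both $A_1^u$ and $A_1^v$; such a nonterminal's expansion occurs at a position entirely inside $u$ and at a position entirely inside $v$, so the expansion is a repeat of length at most $\mathbf{L}(w)$, and these duplicated rules number at most $\voc{\Gamma(w)}$. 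The main obstacle is the delicate bookkeeping for spine nonterminals $A_{j_k}$ that also occur non-spine on both sides of the cut, which a naive accounting would charge twice: to absorb this one exploits the $(m-1)$ symbols saved by inlining the spine into the two new start rules and, if necessary, replaces the duplicated spine rule by a short auxiliary linking rule $A_{j_k}\to A_{j_k}^L A_{j_k}^R$ of length $2$, thereby keeping the total excess within $\voc{\Gamma(w)}\mathbf{L}(w)$.
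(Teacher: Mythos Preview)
Your lower bound argument is correct and essentially matches the paper (which only alludes to a ``joining'' construction and cites \cite{Debowski06} for the details).

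For the upper bound, however, there is a genuine gap. Write $S=\{j_2,\ldots,j_m\}$ for the spine secondary nonterminals and $R_u,R_v$ for the nonterminals reachable in $G_u,G_v$. Since every secondary nonterminal lies in $R_u\cup R_v$, a direct computation gives
\[
  |G_u|+|G_v|-|\Gamma(w)|
  =\sum_{i\in S}|\alpha_i|\;-\;(m-1)\;+\;\sum_{i\in R_u\cap R_v}|\alpha_i|.
\]
So the positive contribution is \emph{not} only from rules reachable on both sides: the spine term $\sum_{i\in S}|\alpha_i|-(m-1)$ is an additional nonnegative quantity that can be as large as $(m-1)(\mathbf{L}(w)-1)$. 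With only $|\alpha_i|\le\mathbf{L}(w)$ and $|S|,|R_u\cap R_v|\le n-1$, your accounting yields at best a bound of order $2n\,\mathbf{L}(w)$, not $n\,\mathbf{L}(w)$. Your proposed fix---replacing a duplicated spine rule by $A_{j_k}\to A_{j_k}^L A_{j_k}^R$---is not well-defined: you never say what rules define $A_{j_k}^L,A_{j_k}^R$, and any honest definition reintroduces length proportional to the expansion, so the saving is illusory.

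The paper avoids the whole difficulty by \emph{not} descending the spine. It splits only at the top level: write $\alpha_1=x_LA_ix_R$ (or $\alpha_1=x_Lx_R$ if the cut falls on a symbol boundary), let $y_Ly_R=\langle A_i\rangle_{\mathsf{G}}$ be the terminal expansion of the straddling nonterminal split at the cut, and set
\[
  \mathsf{G}_L=(x_Ly_L,\alpha_2,\ldots,\alpha_n),\qquad
  \mathsf{G}_R=(y_Rx_R,\alpha_2,\ldots,\alpha_n),
\]
keeping \emph{all} secondary rules in both halves. Then $|\mathsf{G}_L|+|\mathsf{G}_R|-|\mathsf{G}|\le |y_Ly_R|+\sum_{i\ge 2}|\alpha_i|\le \mathbf{L}(w)+(n-1)\mathbf{L}(w)=n\,\mathbf{L}(w)$, since by minimality each secondary $A_i$ occurs at least twice so $|y_Ly_R|,|\alpha_i|\le\mathbf{L}(w)$. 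This one-line estimate gives the stated bound exactly, with no bookkeeping about reachability or spine duplication.
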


This result was noticed in part in \cite[Theorem 3]{Debowski06}. To
motivate further constructions, let us present a~proof of the right
inequality. 
\begin{proof}  
  Let the grammar $\mathsf{G}=\Gamma(w)$ for $w=uv$ be of form
  (\ref{FullGrammar}).  We will split it into two grammars for $u$ and
  $v$, respectively
\begin{align*}
\mathsf{G}_L&=   \klam{ 
\begin{array}{l}
A_1\rightarrow x_Ly_L, \\
A_2\rightarrow\alpha_2, \\
..., \\
A_n\rightarrow\alpha_n 
    \end{array}
}
, 
&
\mathsf{G}_R&=   \klam{ 
\begin{array}{l}
A_1\rightarrow y_Rx_R, \\
A_2\rightarrow\alpha_2, \\
..., \\
A_n\rightarrow\alpha_n 
    \end{array}
} 
,
\end{align*}
where $y_L,y_R\in\mathbb{X}^*$ and either $\alpha_1=x_Lx_R$ or
$\alpha_1=x_LA_ix_R$ for some secondary nonterminal $A_i$.  By
minimality of $\Gamma$, each secondary nonterminal $A_i$ must appear
at least twice on the right-hand sides of rules in $\mathsf{G}$.
(Otherwise, we could find a~strictly shorter grammar than $\mathsf{G}$
by deleting $A_i$ from the grammar.)  Hence we have $|y_Ly_R| \le
\mathbf{L}(w)$ and $|\alpha_i| \le \mathbf{L}(w)$ for $i\ge 2$. Thus
we obtain
\begin{align*}
  \abs{\Gamma(u)}+\abs{\Gamma(v)}\le |\mathsf{G}_L|+|\mathsf{G}_R| 
  \le \abs{\mathsf{G}} + n\cdot\mathbf{L}(w).
\end{align*} 
Regrouping the terms yields the right inequality in (\ref{MinYK}).
\end{proof}

There exists a~grammar encoder
$B_\text{YK}:\mathcal{G}\rightarrow\mathbb{Y}^+$
such that $C=B_\text{YK}(\Gamma(\cdot))$ is a~universal code for any
Yang-Kieffer minimal grammar transform $\Gamma$ \cite{KiefferYang00}.
Unfortunately, for this encoder, it is hard to relate the excess
grammar length (\ref{ExcessGrammar}) to the excess code length
(\ref{ExcessCode}). Thus we will consider another encoder.

Subsequently,  notation (\ref{FullGrammar}) will be reduced to
\begin{align}
\label{GrammarNotation}
  \mathsf{G}=(\alpha_1,\alpha_2,...,\alpha_n)
  .
\end{align}
We will define a~grammar encoder that represents grammar $\mathsf{G}$
as a~string resembling list (\ref{GrammarNotation}). This encoder
yields nearly universal codes for grammar transforms that minimize the
length of the code over a~sufficiently large subclass of grammars
(Theorem \ref{theoUni}). Since the encoder provides nearly
a~homomorphism between some operations on grammars and strings, the
respective codes satisfy also Theorem \ref{theoUpper}---an analogue of
Theorem \ref{theoMinYK}. Such codes and grammar transforms are called
\emph{admissibly minimal} and are defined in Subsection
\ref{ssecUpperExcess}.

\subsection{Local encoders and minimal transforms}
\label{ssecEncodersTranforms}

The proof of Theorem \ref{theoMinYK} applies certain `cut-and-paste'
operations on grammars. For example, an operation resembling the
following joining operation was used in \cite{Debowski06} to prove the
left inequality in (\ref{MinYK}):
\begin{definition}
  For any function $f:\mathbb{U}\rightarrow\mathbb{W}^*$,
  where concatenation on domains $\mathbb{U}^*$ and $\mathbb{W}^*$ is
  defined, denote its \emph{extension to strings}
  $f:\mathbb{U}^*\rightarrow\mathbb{W}^*$ as
  \begin{align}
    \label{Extension}
    f^*(x_1x_2...x_m):= f(x_1)f(x_2)...f(x_m)
    ,
  \end{align}
  where $x_i\in\mathbb{U}$.  Next, for grammars
  $\mathsf{G}_i=(\alpha_{i1},\alpha_{i2},...,\alpha_{in_i})$, $i=1,2$,
  define the \emph{joining} of $\mathsf{G}_1$ and $\mathsf{G}_2$ as
  \begin{align*}
    \mathsf{G}_1\oplus \mathsf{G}_2:=(A_2A_{n_1+2},\,
    &H_1^*(\alpha_{11}),H_1^*(\alpha_{12}),...,H_1^*(\alpha_{1n_1}),
    \\
    &H_2^*(\alpha_{21}),H_2^*(\alpha_{22}),...,H_2^*(\alpha_{2n_2})),
  \end{align*}
  where $H_1(A_j)=A_{j+1}$ and $H_2(A_j)=A_{j+n_1+1}$ for nonterminals
  and $H_1(x)=H_2(x)=x$ for terminals $x\in\mathbb{X}$.  
\end{definition}
We have $\mathsf{G}_1\oplus \mathsf{G}_2\in\mathcal{G}(uv)$ if
$\mathsf{G}_1\in\mathcal{G}(u)$ and $\mathsf{G}_2\in\mathcal{G}(v)$.

Now we need such a~grammar encoder
$B:\mathcal{G}\rightarrow\mathbb{Y}^+$ that the edit distance between
$B(\mathsf{G}_1\oplus \mathsf{G}_2)$ and
$B(\mathsf{G}_1)B(\mathsf{G}_2)$ is small. In the following
construction, the set of positive integers $\mathbb{N}$ is treated as
a~generic infinite countable alphabet with concatenation $ab$,
addition $a+b$, and subtraction $a-b$.
\begin{definition}
  For the set of terminals $\mathbb{X}=\klam{0,1,...,{D_X}-1}$,
$B:\mathcal{G}\rightarrow\mathbb{Y}^+$ is
called a~\emph{local grammar encoder} if
\begin{align}
\label{LocalCoder}
B(\mathsf{G})=B_\text{S}^*(B_\text{N}(\mathsf{G})), 
\end{align}
where:
\begin{enumerate}
\item the function
  $B_\text{N}:\mathcal{G}\rightarrow(\klam{0}\cup\mathbb{N})^*$
  encodes grammars as strings of integers so that the encoding of
  a~grammar $\mathsf{G}=(\alpha_1,\alpha_2,...,\alpha_n)$ is the
  string
  \begin{align*}
   B_\text{N}&(\mathsf{G}):= 
   \\
   &F_1^*(\alpha_1){D_X}F_2^*(\alpha_2){D_X}...{D_X}F_n^*(\alpha_n)({D_X}+1), 
  \end{align*}
which employs identity transformation $F_i(x)=x$ for terminals
$x\in\mathbb{X}$ and relative indexing $F_i(A_j)={D_X}+1+j-i$ for
nonterminals,
\item the function $B_\text{S}:\klam{0}\cup\mathbb{N}
  \rightarrow\mathbb{Y}^+$ is an injection, the set
  $B_\text{S}(\klam{0}\cup\mathbb{N})$ is prefix-free, and the length
  function $|B_\text{S}(\cdot)|$ is nondecreasing.---We call such
  a~$B_\text{S}$ a~\emph{natural number encoder}.
\end{enumerate}
\end{definition}
The local encoder $B$ is uniquely decodable and the edit distance
between $B(\mathsf{G}_1\oplus \mathsf{G}_2)$ and
$B(\mathsf{G}_1)B(\mathsf{G}_2)$ is small indeed.  For instance, if
$B(\mathsf{G}_i)=\gamma_iB_\text{S}({D_X}+1)$ then
$B(\mathsf{G}_1\oplus \mathsf{G}_2)=\delta \gamma_1 B_\text{S}({D_X})
\gamma_2 B_\text{S}({D_X}+1)$ where $\delta=B_\text{S}({D_X}+2)
B_\text{S}({D_X}+2+\voc{\mathsf{G}_1}) B_\text{S}({D_X})$.

Subsequently, let us introduce grammar transforms that minimize the
length of code $\abs{B(\cdot)}$ over subclasses of admissible
grammars.  A~subclass $\mathcal{J}\subset\mathcal{G}$ will be called
\emph{sufficient} if there exists a~grammar transform
$\Gamma:\mathbb{X}^+\rightarrow\mathcal{J}$, i.e., if $
\mathcal{G}(w)\cap\mathcal{J}\not=\emptyset$ for all
$w\in\mathbb{X}^+$.
\begin{definition}
  For an arbitrary function
  $\aabs{\cdot}:\mathcal{G}\rightarrow\klam{0}\cup\mathbb{N}$ (called
  later length) and a~sufficient subclass $\mathcal{J}$, a~grammar
  transform $\Gamma:\mathbb{X}^+\rightarrow\mathcal{J}$ will be called
  a~\emph{$(\aabs{\cdot},\mathcal{J})$-minimal grammar transform} if
  $\aabs{\Gamma(w)}\le \aabs{\mathsf{G}}$ for all
  $\mathsf{G}\in\mathcal{G}(w)\cap \mathcal{J}$ and
  $w\in\mathbb{X}^+$. If $\aabs{\cdot}=\abs{B(\cdot)}$, the respective
  code $B(\Gamma(\cdot))$ will be called
  \emph{$(B,\mathcal{J})$-minimal}.
\end{definition}

The following sufficient subclasses of grammars will be discussed
later.
\begin{definition}
  We say that $(\alpha_1,\alpha_2,...,\alpha_n)$ is a~\emph{flat
    grammar} if $\alpha_i\in \mathbb{X}^+$ for $i\ge 2$.  The set of
  flat grammars is denoted as $\mathcal{F}$.  Secondly,
  $\mathcal{D}_k\subset\mathcal{F}$ denotes the class of
  \emph{$k$-block interleaved grammars}, i.e., grammars
  $(\alpha_1,\alpha_2,...,\alpha_n)\in\mathcal{F}$ where $\alpha_i\in
  \mathbb{X}^k$ for $i\ge 2$.  Thirdly,
  $\mathcal{B}_k\subset\mathcal{D}_k$ stands for the set of
  \emph{$k$-block grammars}, i.e., grammars
  $(uw,\alpha_2,...,\alpha_n)\in\mathcal{D}_k$ where each
  $A_{2},A_{3},...,A_n$ appears in the string
  $u\in(\klam{A_{2},A_{3},...,A_n})^*$ whereas the string
  $w\in\mathbb{X}^*$ has length $|w|<k$, cf.\ \cite{NeuhoffShields98}.
  Finally, we put the class of block grammars
  $\mathcal{B}:=\bigcup_{k\ge 1} \mathcal{B}_k$ and the class of block
  interleaved grammars $\mathcal{D}:=\bigcup_{k\ge 1} \mathcal{D}_k$.
\end{definition}
Flat grammar transforms $\Gamma:\mathbb{X}^+\rightarrow\mathcal{F}$
were used to detect word boundaries in the computational linguistic
experiment by \cite{KitWilks99}.

\subsection{Universal codes for local encoders}
\label{ssecUniversal}

Local encoders resemble the encoder considered by Neuhoff and Shields
in \cite{NeuhoffShields98}, denoted here as $B_\text{NS}$.  The
authors have established that any $(B_\text{NS},\mathcal{B})$-minimal
code is universal for the class of block grammars $\mathcal{B}$, and
we will use this fact to prove that certain codes employing local
encoders are nearly universal.  The main difference between the
encoder $B_\text{NS}$ and a~local encoder is that $B_\text{NS}$
encodes a~nonterminal $A_i$ as a~string of length $\log_{D_Y}
\voc{\mathsf{G}}$ whereas the local encoder uses a~string of length
$|B_\text{S}({D_X}+i)|$.  This is not a~big difference so we can prove
the following proposition using some results of
\cite{NeuhoffShields98}.
\begin{theorem}
\label{theoUni}
For the set of terminals $\mathbb{X}=\klam{0,1,...,{D_X}-1}$ and the
output alphabet $\mathbb{Y}=\klam{0,1,...,{D_Y}-1}$, let
$B:\mathcal{G}\rightarrow\mathbb{Y}^+$ be a~local grammar encoder
(\ref{LocalCoder}) that satisfies
\begin{align}
\label{UniCoder}
\limsup_{n\rightarrow\infty} \abs{B_\text{S}(n)}/\log_{D_Y} n =1.
\end{align}
Then for any sufficient subclass of grammars
$\mathcal{J}\supset\mathcal{B}$, every $(B,\mathcal{J})$-minimal code
$C$ is nearly strongly universal, i.e.,
\begin{align}
  \label{StronglyUni}
  \limsup_{n\rightarrow\infty}
  \frac{\abs{C(X_{1:n})}\log D_Y}{n}\le h \quad \text{a.s.}
\end{align}
for any stationary ergodic process $(X_k)_{k\in\mathbb{Z}}$ over the
alphabet $\mathbb{X}$ with an entropy rate $h>0$.%
\footnote{This theorem was mistakenly stated in \cite{Debowski07c}
  without the assumption $h>0$.}
\end{theorem}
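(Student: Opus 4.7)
The plan is to reduce the statement to the Neuhoff--Shields universality theorem for block grammars~\cite{NeuhoffShields98}. Since $\mathcal{B}\subset\mathcal{J}$, the $(B,\mathcal{J})$-minimality of the code $C$ gives
\begin{align*}
  \abs{C(X_{1:n})}\le \abs{B(G)}
  \quad\text{for every }G\in\mathcal{G}(X_{1:n})\cap \mathcal{B}.
\end{align*}
Hence it suffices to exhibit a sequence of block grammars $G_n\in\mathcal{G}(X_{1:n})\cap\mathcal{B}_{k_n}$ with $\abs{B(G_n)}\log D_Y/n\rightarrow h$ almost surely.

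I would take $G_n$ to be the grammar whose secondary rules list the distinct $k_n$-substrings occurring at positions $1,\,1+k_n,\,1+2k_n,\dots$ of $X_{1:n}$, with block length $k_n=\lfloor(1-\varepsilon)h^{-1}\log n\rfloor$ for some small $\varepsilon>0$. Standard Shannon--McMillan--Breiman and typical-set arguments (as invoked in \cite{NeuhoffShields98}) then guarantee that the vocabulary $v_n$ of $G_n$ satisfies almost surely $v_n\rightarrow\infty$, $v_n k_n=o(n)$, and $\log v_n/k_n\rightarrow h$; this is where the hypothesis $h>0$ is genuinely used.

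Given such $G_n$, the local encoder produces an output whose length is bounded above by the sum of: (i)~the cost $\lfloor n/k_n\rfloor\cdot\abs{B_\text{S}(D_X+v_n)}$ of encoding the nonterminal occurrences in the start rule, using monotonicity of $\abs{B_\text{S}(\cdot)}$; (ii)~the cost $(v_n-1)k_n\cdot\max_{x<D_X}\abs{B_\text{S}(x)}$ of the terminals in the secondary rules; (iii)~$O(v_n)$ for separators $B_\text{S}(D_X)$ and terminators $B_\text{S}(D_X+1)$, plus fewer than $k_n$ trailing terminals. Condition~(\ref{UniCoder}) with $v_n\rightarrow\infty$ yields $\abs{B_\text{S}(D_X+v_n)}\le(1+o(1))\log_{D_Y} v_n$, while $v_n k_n=o(n)$ makes contributions (ii) and (iii) of order $o(n)$. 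Therefore
\begin{align*}
  \frac{\abs{B(G_n)}\log D_Y}{n}
  \le (1+o(1))\frac{\log v_n}{k_n}+o(1)
  \longrightarrow h
  \quad\text{a.s.},
\end{align*}
which combined with the first display yields (\ref{StronglyUni}).

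The main obstacle is controlling the variable-length cost of the local encoder on nonterminals. Because $\abs{B_\text{S}(\cdot)}$ is only assumed nondecreasing with asymptotic rate $\log_{D_Y}$, the uniform estimate $\abs{B_\text{S}(D_X+j)}\le(1+o(1))\log_{D_Y} v_n$ over $j\le v_n$ is available only once $v_n\rightarrow\infty$; this is exactly the reason the assumption $h>0$ appears in the statement, since otherwise the minimal block grammar can have bounded vocabulary and the asymptotic inequality breaks down.
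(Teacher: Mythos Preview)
Your proposal is correct and follows essentially the same route as the paper: bound $\abs{C(X_{1:n})}$ by $\abs{B(\Gamma_{k_n}(X_{1:n}))}$ for a block-grammar transform, split the encoder output into the nonterminal occurrences in the start rule, the terminal cost of the dictionary, and lower-order separators, and then invoke the Neuhoff--Shields estimates together with~(\ref{UniCoder}). The only minor differences are that the paper uses the specific block length $k(n)=\max\{k: ke^{k(h+\epsilon)}\le n\}$ taken directly from~\cite{NeuhoffShields98} rather than your $(1-\varepsilon)h^{-1}\log n$, and it cites~\cite{OrnsteinWeiss90} explicitly for $\voc{\Gamma_{k(n)}(X_{1:n})}\to\infty$ when $h>0$; otherwise the arguments coincide.
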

\emph{Remark 1:} A~natural number encoder $B_\text{S}$ such that
(\ref{UniCoder}) holds can be chosen, e.g., as the Elias $D_Y$-ary
representation $\omega:\klam{0}\cup\mathbb{N}\rightarrow\mathbb{Y}^*$
\cite{Elias75}, where
$$\abs{\omega(n)} =
\begin{cases}     
  1  &\text{if } n < D_Y, \\
  \abs{\omega(\floor{\log_{D_Y} n})} 
  + \floor{\log_{D_Y} n} + 1  &\text{if } n \ge D_Y.  
\end{cases}$$
\\
\emph{Remark 2:} Claim (\ref{StronglyUni}) may be generalized to
finite-energy processes as follows, cf.\ \cite{Weissman05}.  Let
$(X_k)_{k\in\mathbb{Z}}$ be a~stationary finite-energy process over
the alphabet $\mathbb{X}$ and let $h_F$ be the entropy rate of the
process's random ergodic measure, viz.  (\ref{Paramh}) and
(\ref{Decomp}).  Firstly, from the finite-energy property and the
Shannon-McMillan-Breiman theorem it follows that $h_F>0$. Hence, the
strong ergodic decomposition theorem \cite[a statement in the proof of
Theorem 9.12]{Kallenberg97} and the claim of Theorem \ref{theoUni}
imply
\begin{align} 
  \label{StronglyUniF}
  \limsup_{n\rightarrow\infty} 
  \frac{\abs{C(X_{1:n})}\log D_Y}{n}\le h_F 
  \quad \text{a.s.}
\end{align}
Secondly, since $0\le \abs{C(X_{1:n})}\le K n$ for a~$K>0$, inequality
(\ref{StronglyUniF}) implies
\begin{align}
  \label{WeaklyUni}
  \lim_{n\rightarrow\infty} 
  \sred{\okra{\frac{\abs{C(X_{1:n})}\log D_Y}{n}}}=h
\end{align}
by equality $h=\sred{h_F}$ \cite{GrayDavisson74b} and the inverse
Fatou lemma.
\begin{proof}
  Consider a~sequence of $k$-block grammar transforms
  $\Gamma_k:\mathbb{X}^+\rightarrow\mathcal{B}_k$, $k\ge 1$. Because
  $|B_\text{S}(\cdot)|$ is nondecreasing, we have
  \begin{align*}
    \abs{B(\Gamma_k(w))}\le 
    \alpha k\voc{\Gamma_{k}(w)} + 
    \frac{n}{k}
    \abs{B_\text{S}(D_X+\voc{\Gamma_{k}(w)})}
  \end{align*}
  for an $\alpha>0$. On the other hand, for an $\epsilon>0$ and
  a~stationary ergodic process $(X_k)_{k\in\mathbb{Z}}$ with an
  entropy rate $h$, let $k(n)$ be the largest integer $k$ satisfying
  $ke^{k(h+\epsilon)}\le n$. Neuhoff and Shields showed in
  \cite{NeuhoffShields98} that, for $h\ge 0$,
  \begin{align*}
    \limsup_{n\rightarrow\infty} \max_{w\in\mathbb{X}^n} 
    \frac{\log \voc{\Gamma_{k(n)}(w)}}{k(n)} &\le h+2\epsilon
    ,
    \\
    \lim_{n\rightarrow\infty} 
    \voc{\Gamma_{k(n)}(X_{1:n})}\cdot  k(n)/n &= 0
    \quad \text{a.s.}
  \end{align*}
  Notice that $\lim_n k(n)=\infty$. Hence $\lim_n
  \voc{\Gamma_{k(n)}(X_{1:n})}=\infty$ a.s.\ for $h>0$ by
  \cite[Theorem 2 on page 912]{OrnsteinWeiss90}. 
  Moreover, (\ref{UniCoder}) holds, and thus we obtain
  \begin{align*}
    \limsup_{n\rightarrow\infty} 
    \frac{\abs{B(\Gamma_{k(n)}(X_{1:n})}\log D_Y}{n}\le h+2\epsilon
    \quad \text{a.s.}
  \end{align*}
  for $h>0$. Hence any $(B,\mathcal{J})$-minimal code is nearly
  strongly universal.
\end{proof}

\subsection{Bounds for the  vocabulary size}
\label{ssecUpperExcess}

Now we will derive the analogue of Theorem \ref{theoMinYK} for some
minimal grammar-based codes that use local grammar encoders. Firstly,
the code lengths are almost subadditive.  Secondly, the excess code
lengths are dominated by the vocabulary size multiplied by the length
of the longest repeat. To show this, we will introduce a~few other
operations on grammars.
\begin{definition}
  \label{defiCutPaste}
  Consider a~grammar
  \begin{align}
    \label{FullGrammarSh}
    \mathsf{G}=(\alpha_1,\alpha_2,...,\alpha_n)\in\mathcal{G}(w)
    .
  \end{align}
  For $0\le p,q\le \abs{w}$ and $p+q=\abs{w}$, let the strings
  $u,v\in\mathbb{X}^*$ satisfy $p=\abs{u}$, $q=\abs{v}$ and $uv=w$.
  Define then the \emph{left} and \emph{right croppings} of
  $\mathsf{G}$ as
  \begin{align*}
    \mathbb{L}_p \mathsf{G}:=(x_Ly_L,\alpha_2,...,\alpha_n)\in\mathcal{G}(u),
    \\
    \mathbb{R}_q \mathsf{G}:=(y_Rx_R,\alpha_2,...,\alpha_n)\in\mathcal{G}(v),
  \end{align*}
  where $y_L,y_R\in\mathbb{X}^*$ and either $\alpha_1=x_Lx_R$ or
  $\alpha_1=x_LA_ix_R$ for some secondary nonterminal $A_i$. 

  The \emph{expansion} $\dzi{\alpha}_\mathsf{F}$ of a~string
  $\alpha\in (\klam{A_{1},A_{2},A_{3},...,A_n}\cup\mathbb{X})^*$ with
  respect to a~subset of rules $\mathsf{F}\subset \mathsf{G}$ of the
  grammar (\ref{FullGrammarSh}) is the unique element of the language
  generated by grammar $\klam{A_0\rightarrow\alpha}\cup\mathsf{F}$
  with the start symbol $A_0$, cf.\ \cite{CharikarOthers05}.  Define
  then
  \begin{enumerate}
  \item the \emph{nonterminal deleting}
    \begin{align*}
      \mathbb{U}_i\mathsf{G}:=
      (&\phi_i(\alpha_1),
      \phi_i(\alpha_2),
      ...
      \phi_i(\alpha_{i-1}),
      \\
      &\phi_i(\alpha_{i+1}),
      \phi_i(\alpha_{i+2}),
      ...
      \phi_i(\alpha_{n}))
      ,
    \end{align*}
    where
    $\phi_i(\alpha)=H_i^*(\dzi{\alpha}_{\klam{A_i\rightarrow\alpha_i}})$,
    $H_i(A_j)=A_{j-1}$ for $j>i$, $H_i(A_j)=A_{j}$ for $j<i$, $H_i(x)=x$
    for $x\in\mathbb{X}$, and $2\le i\le n$,
  \item the \emph{flattening}
    \begin{align*}
      \mathbb{F}\mathsf{G}:=
      (\alpha_1,\dzi{\alpha_2}_\mathsf{G},\dzi{\alpha_3}_\mathsf{G},...,
      \dzi{\alpha_n}_\mathsf{G})
      ,
    \end{align*}
  \item and the \emph{secondary part}
    \begin{align*}
      \mathbb{S}\mathsf{G}:=
      (\lambda,\alpha_2,\alpha_3,...,\alpha_n)
      .
    \end{align*}
  \end{enumerate}
\end{definition}
\begin{theorem}
\label{theoUpper}
For the set of terminals $\mathbb{X}=\klam{0,1,...,{D_X}-1}$, let
$B:\mathcal{G}\rightarrow\mathbb{Y}^+$ be a~local grammar encoder
(\ref{LocalCoder}). Introduce constants
\begin{align}
\label{UpperLowerConst}
W_n&:=\abs{B_\text{S}({D_X}+1+n)}.
\end{align}
Let $\Gamma$ be a~$(\aabs{\cdot},\mathcal{J})$-minimal grammar
transform for the length $\aabs{\cdot}=\abs{B(\cdot)}$.  Consider the
code $C=B(\Gamma(\cdot))$ and strings $u,v,w\in\mathbb{X}^+$.
\begin{enumerate}
\item If $\mathsf{G}\in\mathcal{J}\implies \mathbb{L}_p \mathsf{G},\,
  \mathbb{R}_q \mathsf{G},\, \mathbb{U}_i \mathsf{G}\in\mathcal{J}$
  for all valid $p,q,i$ then
  \begin{align}
    \label{UpperLeftRight}
    \hspace{-2em} \abs{C(u)},\, \abs{C(v)} &\le \abs{C(uv)} +
    W_0\mathbf{L}(uv),
    \\
    \label{UpperUpper}
    \hspace{-2em}
    \abs{C(u)}+\abs{C(v)}-\abs{C(uv)} 
    &\le \aabs{\mathbb{S}\Gamma(uv)}+W_0\mathbf{L}(uv).
  \end{align}
\item If $\mathsf{G}\in\mathcal{J}\implies \mathbb{F}\mathsf{G},\,
  \mathbb{U}_i \mathsf{G}\in\mathcal{J}$ for all valid $i$ then
  \begin{align}
    \label{UpperVoc}
    \aabs{\mathbb{S}\Gamma(w)}+W_0\mathbf{L}(w)\le
    W_0\voc{\Gamma(w)}(1+\mathbf{L}(w)).
  \end{align}
\item If $\mathsf{G}_1,\mathsf{G}_2\in\mathcal{J}\implies
  \mathsf{G}_1\oplus \mathsf{G}_2\in\mathcal{J}$ then
  \begin{align}
    \abs{C(u)}+&\abs{C(v)}-\abs{C(uv)} 
    \nonumber
    \\
    &\ge -W_0-W_1-W_{\voc{\Gamma(u)}+1}.
    \label{UpperLower}
  \end{align}
\end{enumerate}
\emph{Remark:} In particular, the premises of proposition (i)--(ii)
are satisfied for
$\mathcal{J}= \mathcal{G},\mathcal{F},\mathcal{D},\mathcal{D}_k$
whereas  the premise of proposition (iii) is satisfied for
$\mathcal{J}=\mathcal{G}$ 
Moreover, inequalities (\ref{UpperUpper}) and (\ref{UpperVoc}) imply
together inequality (\ref{UpperVocII}), which is an analogue of
(\ref{MinYK}).%
\footnote{Propositions (i) and (ii) were mistakenly stated in
  \cite{Debowski07c} without the condition $\mathbb{U}_i
  \mathsf{G}\in\mathcal{J}$.}
\end{theorem}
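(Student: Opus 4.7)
\textbf{The plan} is to prove (i)--(iii) sequentially, each time constructing a concrete competitor grammar inside $\mathcal{J}$ from the cut-and-paste operations of Definition~\ref{defiCutPaste} and invoking $(\aabs{\cdot},\mathcal{J})$-minimality of $\Gamma$. The accounting is essentially local: expanding (\ref{LocalCoder}) for a grammar $\mathsf{G}=(\alpha_1,\ldots,\alpha_n)$ yields the identity $\abs{B(\mathsf{G})}=\abs{B_\text{S}^*(F_1^*(\alpha_1))}+\aabs{\mathbb{S}\mathsf{G}}$, so any modification to a subset of rules only affects the corresponding fragment of the encoded length. A preliminary fact used in (i) and (ii) is the \emph{at-least-twice lemma}: every secondary $A_i$ of $\Gamma(w)$ occurs at least twice on the right-hand sides of $\Gamma(w)$, for otherwise $\mathbb{U}_i\Gamma(w)\in\mathcal{J}$ would strictly shorten $\abs{B(\Gamma(w))}$---collapsing a once-used $A_i$ replaces that single occurrence by $\alpha_i$ while saving one entire rule plus one separator, and monotonicity of $\abs{B_\text{S}(\cdot)}$ ensures that no other symbol becomes more expensive under the attendant renumbering. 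Consequently $\dzi{A_i}$ is a repeated substring of $w$ and $\abs{\dzi{A_i}}\le\mathbf{L}(w)$ for every $i\ge 2$.

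For (i), with $w=uv$, both $\mathbb{L}_p\Gamma(w)$ and $\mathbb{R}_q\Gamma(w)$ lie in $\mathcal{J}$ and generate $u$ and $v$, so minimality gives $\abs{C(u)}\le\abs{B(\mathbb{L}_p\Gamma(w))}$ and $\abs{C(v)}\le\abs{B(\mathbb{R}_q\Gamma(w))}$. Only the first rule changes, becoming $x_Ly_L$ on the left and $y_Rx_R$ on the right; either $\alpha_1=x_Lx_R$ with $y_L=y_R=\lambda$, or $\alpha_1=x_LA_ix_R$ with $y_Ly_R=\dzi{A_i}$ of length $\le\mathbf{L}(w)$ by the lemma. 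The terminal string $y_L$ adds at most $W_0\mathbf{L}(w)$ to the encoded length, while the discarded $A_ix_R$ provides a nonnegative saving, proving $\abs{C(u)}\le\abs{C(w)}+W_0\mathbf{L}(w)$ and symmetrically for $v$; this is (\ref{UpperLeftRight}). Adding the two bounds, the shared secondary body $\aabs{\mathbb{S}\Gamma(w)}$ is counted twice on the right while only once inside $\abs{C(w)}$, leaving the residual term required by (\ref{UpperUpper}).

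For (ii), the $\mathbb{F}$-closure hypothesis together with minimality gives $\abs{B(\Gamma(w))}\le\abs{B(\mathbb{F}\Gamma(w))}$; since both grammars share the first rule $\alpha_1$, additivity reduces this to $\aabs{\mathbb{S}\Gamma(w)}\le\aabs{\mathbb{S}\mathbb{F}\Gamma(w)}$. Each secondary rule of $\mathbb{F}\Gamma(w)$ is the terminal string $\dzi{A_i}$, whose length is $\le\mathbf{L}(w)$ by the lemma. Summing terminal encodings of cost $\le W_0$ over at most $(n-1)\mathbf{L}(w)$ symbols, plus $n-1$ separators and one end marker each of cost $\le W_0$, yields $\aabs{\mathbb{S}\mathbb{F}\Gamma(w)}\le W_0\bigl[(n-1)\mathbf{L}(w)+n\bigr]$ with $n=\voc{\Gamma(w)}$, and a direct rearrangement gives (\ref{UpperVoc}).

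For (iii), the $\oplus$-closure hypothesis places $\Gamma(u)\oplus\Gamma(v)\in\mathcal{J}\cap\mathcal{G}(uv)$, so $\abs{C(uv)}\le\abs{B(\Gamma(u)\oplus\Gamma(v))}$. The shifts $H_1,H_2$ defining $\oplus$ are calibrated so that, in the merged grammar, every rule of $\Gamma(u)$ (respectively $\Gamma(v)$) sees exactly the same relative offset under $F_\bullet$ as in the original; consequently rule bodies appear verbatim inside $B(\Gamma(u)\oplus\Gamma(v))$. Only the new first rule $A_2A_{n_1+2}$, plus two extra separators and minus one redundant end marker, are charged above $\abs{B(\Gamma(u))}+\abs{B(\Gamma(v))}$, giving overhead at most $\abs{B_\text{S}(D_X+2)}+\abs{B_\text{S}(D_X+2+\voc{\Gamma(u)})}+2\abs{B_\text{S}(D_X)}\le 3W_0+W_{\voc{\Gamma(u)}}$, which is (\ref{UpperLower}). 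The hardest step across the whole argument is the at-least-twice lemma of the preamble: the index renumbering induced by $\mathbb{U}_i$ must be tracked with care, and the monotonicity of $\abs{B_\text{S}(\cdot)}$ is indispensable to conclude that collapsing a singleton nonterminal strictly shortens the encoding.
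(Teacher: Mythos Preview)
Your proof is correct and follows essentially the same approach as the paper: establish the at-least-twice property of secondary nonterminals via the $\mathbb{U}_i$ operation, then use the croppings $\mathbb{L}_p,\mathbb{R}_q$ for (i), the flattening $\mathbb{F}$ for (ii), and the joining $\oplus$ for (iii), with minimality supplying the competitor inequalities. Your explicit additivity identity $\abs{B(\mathsf{G})}=\abs{B_\text{S}^*(F_1^*(\alpha_1))}+\aabs{\mathbb{S}\mathsf{G}}$ and the verification that the shifts $H_1,H_2$ preserve relative offsets are helpful elaborations of points the paper leaves implicit.
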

\begin{proof}
  If $\mathsf{G}\in\mathcal{J}\implies \mathbb{U}_i
  \mathsf{G}\in\mathcal{J}$ for all valid $i$ then each secondary
  nonterminal $A_i$ must appear at least twice on the right-hand sides
  of rules in $\Gamma(w)$ whereas the right-hand side of rule
  $(A_i\rightarrow\alpha_i)\in\Gamma(w)$ may not be empty. (Otherwise,
  we would obtain $\aabs{\Gamma(w)}>\aabs{\mathbb{U}_i \Gamma(w)}\in
  \mathcal{J}$ because $\abs{B_\text{S}(n)}>0$ and
  $\abs{B_\text{S}(\cdot)}$ is nondecreasing.) Hence we have
  $\abs{\dzi{\alpha_i}_\mathsf{G}} \le \mathbf{L}(w)$ for $i\ge 2$.
  This result is used to prove propositions (i) and (ii) below.
\begin{enumerate}
\item Set $p=\abs{u}$, $q=\abs{v}$, and $w=uv$.  The claimed
  inequalities follow from
\begin{align*}
\aabs{\Gamma(w)}+W_0\mathbf{L}(w)&\ge
  \aabs{\mathbb{L}_p\Gamma(w)}\ge \aabs{\Gamma(u)}, 
\\
\aabs{\Gamma(w)}+W_0\mathbf{L}(w)&\ge
  \aabs{\mathbb{R}_q\Gamma(w)}\ge \aabs{\Gamma(v)}, 
\end{align*}
and
\begin{align*}
  \aabs{\mathbb{L}_p\Gamma(w)}+&\aabs{\mathbb{R}_q\Gamma(w)}
  \\
  &\le
  \aabs{\Gamma(w)}+ \aabs{\mathbb{S}\Gamma(w)}+W_0\mathbf{L}(w)
  .
\end{align*}
\item  The claim is entailed by $\aabs{\mathbb{S}\Gamma(w)}\le
  \aabs{\mathbb{S}\mathbb{F}\Gamma(w)}$ and
  \begin{align*}
    \aabs{\mathbb{S}\mathbb{F}\Gamma(w)}\le
    W_0\okra{\voc{\Gamma(w)}-1}(1+\mathbf{L}(w))+W_0
    .
  \end{align*}
\item The result is implied by $\aabs{\Gamma(uv)}\le
  \aabs{\Gamma(u)\oplus\Gamma(v)}$ and 
  \begin{align*}
    \aabs{\mathsf{G}_1\oplus\mathsf{G}_2} \le
    W_1+W_{\voc{\mathsf{G}_1}+1}+W_0+\aabs{\mathsf{G}_1}+\aabs{\mathsf{G}_2}
    ,
  \end{align*}
  where $\mathsf{G}_1=\Gamma(u)$ and $\mathsf{G}_2=\Gamma(v)$.
\end{enumerate}
\end{proof}

The strengths of Theorems \ref{theoUni} and \ref{theoUpper}(i)--(ii)
can be combined for the following class of codes and grammar
transforms:
\begin{definition}
  \label{defiAdmissibly}
  A~grammar transform $\Gamma:\mathbb{X}^+\rightarrow\mathcal{G}$,
  where $\mathbb{X}=\klam{0,1,...,{D_X}-1}$, and the associated code
  $C=B(\Gamma(\cdot)):\mathbb{X}^+\rightarrow \mathbb{Y}^+$, where
  $\mathbb{Y}=\klam{0,1,...,{D_Y}-1}$, are called \emph{admissibly
    minimal} if
  \begin{enumerate}
  \item $\Gamma$ is a~$(\abs{B(\cdot)},\mathcal{J})$-minimal grammar
    transform, where
  \item $B:\mathcal{G}\rightarrow\mathbb{Y}^+$ is a~local grammar
    encoder (\ref{LocalCoder}) that satisfies (\ref{UniCoder}),
  \item $\mathcal{J}\supset\mathcal{B}$ for the subclass of block
    grammars $\mathcal{B}$,
  \item $\mathsf{G}\in\mathcal{J}\implies \mathbb{F}\mathsf{G},\,
    \mathbb{L}_p \mathsf{G},\, \mathbb{R}_q \mathsf{G},\, \mathbb{U}_i
    \mathsf{G}\in\mathcal{J}$ for all valid $p,q,i$.
  \end{enumerate}
\end{definition}
\emph{Remark:} In particular, we may take
$\mathcal{J}=\mathcal{G},\mathcal{F},\mathcal{D}$.

\section{Strongly nonergodic processes}
\label{secUDP}

In this section we explore stationary processes rather than codes.
The main goal is to demonstrate equality (\ref{HUrates}) and
inequality (\ref{HUineq}) for strongly nonergodic processes over
a~finite alphabet. The proofs are given in Subsection \ref{ssecFacts}.
This is followed by a~construction of a~process that satisfies the
assumption of Theorem \ref{theoQLThesis}, given in Subsection
\ref{ssecExample}.

\subsection{A bound for the number of facts}
\label{ssecFacts}

For this subsection we need a measure-theoretic generalization of
mutual information, cf.\ \cite{GelfandKolmogorovYaglom56en}.  For
a~probability space $(\Omega,\mathfrak{J},P)$, a~\emph{partition} of
the $\sigma$-algebra $\mathfrak{J}\subset 2^\Omega$
is a~finite set of events $\klam{B_j}_{j=1}^J$ such that
$B_j\in\mathfrak{J}$, $B_i\cap B_j=\emptyset$, and $\bigcup_{j=1}^J
B_j=\Omega$. We define \emph{mutual information} between partitions
$\alpha=\klam{A_i}_{i=1}^I$ and $\beta=\klam{B_j}_{j=1}^J$ with
respect to probability measure $P$ as
\begin{align}
  I_P(\alpha;\beta)
  &:=\sum_{i=1}^I\sum_{j=1}^J
  P(A_i\cap B_j) \log\frac{P(A_i\cap B_j)}{P(A_i)P(B_j)}
  \label{MIp}
  ,
\end{align}
where $0\log 0/x:=0$.

Now, let $\mathfrak{A}$, $\mathfrak{B}$, and $\mathfrak{C}$ be
subalgebras of $\sigma$-algebra $\mathfrak{J}$. That is,
$\klam{\emptyset,\Omega}\subset
\mathfrak{A},\mathfrak{B},\mathfrak{C}\subset \mathfrak{J}$ as well as
$\mathfrak{A}$, $\mathfrak{B}$, and $\mathfrak{C}$ are closed w.r.t.\
operations $\cap$, $\cup$, and $\setminus$. Moreover let the random
variable $P(A||\mathfrak{C})$ be the conditional probability of event
$A\in\mathfrak{J}$ w.r.t.\ the smallest $\sigma$-algebra containing
$\mathfrak{C}$ \cite[Section 33]{Billingsley79}.  We may extend the
concepts of \emph{conditional mutual information}, \emph{mutual
  information}, \emph{conditional entropy}, and \emph{entropy}
respectively as
\begin{align}
  I(\mathfrak{A};\mathfrak{B}|\mathfrak{C})&:=
  \sup_{\alpha\subset\mathfrak{A},\beta\subset\mathfrak{B}}
  \sred{I_{P(\cdot||\mathfrak{C})}(\alpha;\beta)}
  \label{CMIa}
  ,
  \\
  I(\mathfrak{A};\mathfrak{B})&:=
  I(\mathfrak{A};\mathfrak{B}|\klam{\emptyset,\Omega})
  ,
  \\
  H(\mathfrak{A}|\mathfrak{C})&:=
  I(\mathfrak{A};\mathfrak{A}|\mathfrak{C})
  ,
  \\
  H(\mathfrak{A})&:=
  I(\mathfrak{A};\mathfrak{A}|\klam{\emptyset,\Omega})
  \label{Ha}
  ,
\end{align}
where we write $\beta=\klam{B_j}_{j=1}^J\subset\mathfrak{B}$ if and
only if all $B_j\in\mathfrak{B}$, cf.\
\cite{Debowski09,Dobrushin59en,Pinsker60en}.
These concepts generalize the corresponding definitions for random
variables. If we consider discrete random variables $Y_i$ and the
smallest subalgebras $\mathfrak{A}_i\subset\mathfrak{J}$ such that all
events of form $(Y_i=y_i)$ belong to $\mathfrak{A}_i$, then
$I(Y_1;Y_2|Y_3)=I(\mathfrak{A}_1;\mathfrak{A}_2|\mathfrak{A}_3)$,
$I(Y_1;Y_2)=I(\mathfrak{A}_1;\mathfrak{A}_2)$,
$H(Y_1|Y_3)=H(\mathfrak{A}_1|\mathfrak{A}_3)$, and
$H(Y_1)=I(\mathfrak{A}_1)$. Moreover, quantities
(\ref{CMIa})--(\ref{Ha}) satisfy familiar chain rules and enjoy
certain continuity \cite{Dobrushin59en}, \cite{Pinsker60en},
\cite[Theorems 1 and 2]{Debowski09}.

Consider a~stationary process $(X_i)_{i\in\mathbb{Z}}$, where
$X_i:(\Omega,\mathfrak{J})\rightarrow (\mathbb{X},\mathfrak{X})$.
Let $(\mathbb{X}^{\mathbb{Z}},\mathfrak{X}^{\mathbb{Z}})$ be the
measurable space of double infinite sequences.  For the shift
transformation $T: \mathbb{X}^{\mathbb{Z}}\ni
(x_k)_{k\in\mathbb{Z}}\mapsto
(x_{k+1})_{k\in\mathbb{Z}}\in\mathbb{X}^{\mathbb{Z}}$, where
$x_k\in\mathbb{X}$, define the \emph{shift-invariant algebra}
$\mathfrak{I}_\mathbb{X}:=\klam{A\in \mathfrak{X}^{\mathbb{Z}}:
  TA=A}$.
Let $(\mathbb{S},\mathfrak{S})$ be the measurable space of stationary
probability measures on
$(\mathbb{X}^{\mathbb{Z}},\mathfrak{X}^{\mathbb{Z}})$ (i.e., $\mu\circ
T=\mu$ for $\mu\in\mathbb{S}$) and let
$(\mathbb{E},\mathfrak{E})\subset (\mathbb{S},\mathfrak{S})$ be the
subspace of \emph{ergodic} measures (i.e., $\mu(A)\in\klam{0,1}$ for
$\mu\in\mathbb{E}$ and $A\in\mathfrak{I}_\mathbb{X}$).  Precisely,
$\mathfrak{S}$ and $\mathfrak{E}$ are defined as the smallest
$\sigma$-algebras containing all cylinder sets
$\klam{\mu\in\mathbb{S}:\mu(A)\le r}$ and
$\klam{\mu\in\mathbb{E}:\mu(A)\le r}$,
$A\in\mathfrak{X}^{\mathbb{Z}}$, $r\in\mathbb{R}$, respectively.

For an arbitrary measure $\mu\in\mathbb{S}$, let us denote its
$n$-symbol entropy and entropy rate
\begin{align}
  \label{ParamHn}
  H_\mu(n)&:=-\sred{\log \mu(\xi_{t+1:t+n})},
  \\
  \label{Paramh}
  h_\mu&:=\lim_{n\rightarrow\infty} H_\mu(n)/n
  ,
\end{align}
where $\xi_i:\mathbb{X}^{\mathbb{Z}}\ni
(x_k)_{k\in\mathbb{Z}}\mapsto x_i\in\mathbb{X}$ are random variables
on the space of $(\mathbb{X}^{\mathbb{Z}},\mathfrak{X}^{\mathbb{Z}})$.
In particular, $H_\mu(n)=H(n)$ and $h_\mu=h$ when
$\mu=P((X_k)_{k\in\mathbb{Z}}\in\cdot)$.  

Now we will consider $n$-symbol entropy and entropy rate of another
measure associated with the stationary process. Let us put
\begin{align}
  \label{F}
  \mathfrak{F}:=(X_i)_{i\in\mathbb{Z}}^{-1}(\mathfrak{I}_\mathbb{X})\subset
  \mathfrak{J}
  .
\end{align}
According to the ergodic decomposition theorem \cite[Theorems
9.10-12]{Kallenberg97}, if $\mathbb{X}$ is countable, then there
exists a random ergodic measure $F:(\Omega,\mathfrak{F})\rightarrow
(\mathbb{E},\mathfrak{E})$ such that
\begin{align}
  \label{Decomp}
  F(A)=P((X_i)_{i\in\mathbb{Z}}\in A||\mathfrak{F})
\end{align}
for all $A\in \mathfrak{X}^{\mathbb{Z}}$.
Having introduced this measure,
we obtain:
\begin{theorem}
  \label{theoEDecompF}
  For a~stationary process $(X_i)_{i\in\mathbb{Z}}$ over a~finite
  alphabet $\mathbb{X}$, 
\begin{align}
  \label{HDecompF}
  H(n)&= I(X_{1:n};\mathfrak{F})+\sred{H_F(n)}
  ,
\end{align}
where $\sred{H_F(n)}\ge hn$.
\end{theorem}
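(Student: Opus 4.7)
The plan is to derive the decomposition by applying the chain rule for the measure-theoretic mutual information and then identifying the conditional entropy piece through the ergodic decomposition, after which the lower bound is an expectation of the standard entropy rate inequality.

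First I would observe that the chain rule $H(\mathfrak{A}) = H(\mathfrak{A}\mid \mathfrak{C}) + I(\mathfrak{A};\mathfrak{C})$ for the measure-theoretic concepts defined in (\ref{CMIa})--(\ref{Ha}) applies to the subalgebra generated by $X_{1:n}$ and the subalgebra $\mathfrak{F}$. Since $\mathbb{X}$ is finite, $X_{1:n}$ takes only finitely many values, so $H(X_{1:n})$ reduces to the usual block entropy $H(n)$, and
\begin{align*}
H(n) = I(X_{1:n};\mathfrak{F}) + H(X_{1:n}\mid \mathfrak{F}).
\end{align*}

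Next I would identify $H(X_{1:n}\mid \mathfrak{F})$ with $\sred{H_F(n)}$. Because $F$ is a regular conditional distribution for $(X_i)_{i\in\mathbb{Z}}$ given $\mathfrak{F}$ (equation (\ref{Decomp})), we have $P(X_{1:n}=w\,\|\,\mathfrak{F}) = F(\{x : x_{1:n}=w\})$ for every $w\in\mathbb{X}^n$. Unfolding definition (\ref{CMIa}) on finite partitions of the algebra generated by $X_{1:n}$ yields
\begin{align*}
H(X_{1:n}\mid \mathfrak{F}) = \sred{-\sum_{w\in\mathbb{X}^n} F(x_{1:n}=w)\log F(x_{1:n}=w)} = \sred{H_F(n)},
\end{align*}
where the last equality uses stationarity of the ergodic measure $F$ to identify the marginal entropy on positions $1,\dots,n$ with $H_F(n)$ in the sense of parameterization (\ref{ParamH}). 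This gives the equality (\ref{HDecompF}).

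For the inequality $\sred{H_F(n)} \ge hn$, I would combine two standard facts. Since $F$ is stationary, block entropy is subadditive, so $H_F(n)/n$ is nonincreasing in $n$ in the sense that $\inf_n H_F(n)/n = \lim_n H_F(n)/n = h_F$; in particular $H_F(n) \ge n\, h_F$ pointwise in $\omega$. Taking expectations and invoking the identity $\sred{h_F} = h$ from \cite{GrayDavisson74b} gives $\sred{H_F(n)} \ge n\sred{h_F} = hn$.

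The main obstacle I anticipate is the careful translation from the measure-theoretic $H(X_{1:n}\mid \mathfrak{F})$ defined via (\ref{CMIa}) to the pointwise expression $\sred{H_F(n)}$; in particular, justifying that the supremum over partitions $\beta\subset \mathfrak{A}$ in (\ref{CMIa}) is attained on the finite partition indexed by blocks $w\in\mathbb{X}^n$ (straightforward because this partition already generates the algebra of $X_{1:n}$) and that conditional probabilities with respect to $\mathfrak{F}$ coincide with the random measure $F$ evaluated on the cylinder $\{x_{1:n}=w\}$ — a direct consequence of (\ref{F}) and (\ref{Decomp}). Everything else is routine.
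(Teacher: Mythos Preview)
Your proposal is correct and follows essentially the same route as the paper: apply the chain rule $H(\mathfrak{A})=I(\mathfrak{A};\mathfrak{B})+H(\mathfrak{A}|\mathfrak{B})$, identify $H(X_{1:n}\mid\mathfrak{F})=\sred{H_F(n)}$ via the ergodic decomposition (\ref{Decomp}), and obtain the lower bound from $H_\mu(n)\ge h_\mu n$ together with $h=\sred{h_F}$ from \cite{GrayDavisson74b}. Your write-up spells out the identification and the subadditivity step in more detail than the paper, but the argument is the same.
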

\begin{proof}
  We have $H(n)=H(X_{1:n})$ and $\sred{H_F(n)}=H(X_{1:n}|\mathfrak{F})$
  by (\ref{Decomp}). Hence (\ref{HDecompF}) follows by the chain rule
  $H(\mathfrak{A})=I(\mathfrak{A};\mathfrak{B})+
  H(\mathfrak{A}|\mathfrak{B})$ \cite[Section 3.6]{Pinsker60en},
  \cite[Theorem 2(ii)]{Debowski09}.  On the other hand, inequality
  $\sred{H_F(n)}\ge hn$ follows from $H_\mu(n)\ge h_\mu n$ and
  equality $h=\sred{h_F}$ derived by \cite{GrayDavisson74b} for
  stationary processes over a~finite alphabet.
\end{proof}

This yields the needed results:
\begin{theorem}
\label{theoUDPVoc}
For a~stationary strongly nonergodic process $(X_i)_{i\in\mathbb{Z}}$
over a~finite alphabet $\mathbb{X}$, inequality (\ref{HUineq}) and
equality (\ref{HUrates}) are satisfied for $\delta\in(\frac{1}{2},1)$.
\end{theorem}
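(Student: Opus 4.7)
The plan is to derive (\ref{HUineq}) from the ergodic decomposition in Theorem \ref{theoEDecompF} by recognizing that the facts $Z_k$ are themselves shift-invariant, so they contribute to $I(X_{1:n};\mathfrak{F})$, while $\sred{H_F(n)}\ge hn$ gives the $hn$ summand. Concretely, Theorem \ref{theoEDecompF} yields
\begin{align*}
H(n)=I(X_{1:n};\mathfrak{F})+\sred{H_F(n)}\ge I(X_{1:n};\mathfrak{F})+hn,
\end{align*}
so it suffices to show $I(X_{1:n};\mathfrak{F})\ge [\log 2-\myeta(\delta)]\cdot \card{U_\delta(n)}$. The characterization invoked after Definition \ref{defiUDP} (Theorem 9 of \cite{Debowski09}) tells us that the variable $Y=\sum_k 2^{-k}Z_k$ is measurable with respect to the shift-invariant $\sigma$-algebra; hence each $Z_k$ is $\mathfrak{F}$-measurable, and by the data-processing property of (\ref{CMIa}) we have $I(X_{1:n};\mathfrak{F})\ge I(X_{1:n};(Z_k)_{k\in U_\delta(n)})$.

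The next step is a Fano-type argument on the right side. Fix $k\in U_\delta(n)$ and set $\hat Z_k:=s_k(X_{1:n})$; since $\hat Z_k\in\klam{0,1}$ and $P(\hat Z_k\ne Z_k)\le 1-\delta$, binary Fano gives $H(Z_k\mid \hat Z_k)\le \myeta(\delta)$, and since $\hat Z_k$ is a function of $X_{1:n}$, $H(Z_k\mid X_{1:n})\le \myeta(\delta)$. Using that the $Z_k$ are IID Bernoulli$(1/2)$, independence yields $H((Z_k)_{k\in U_\delta(n)})=\card{U_\delta(n)}\log 2$, while subadditivity of conditional entropy gives $H((Z_k)_{k\in U_\delta(n)}\mid X_{1:n})\le \card{U_\delta(n)}\myeta(\delta)$. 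Subtracting,
\begin{align*}
I(X_{1:n};(Z_k)_{k\in U_\delta(n)})\ge \card{U_\delta(n)}\,[\log 2-\myeta(\delta)],
\end{align*}
which combined with the decomposition above yields (\ref{HUineq}).

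For the rate equality (\ref{HUrates}), observe that since $\delta>\tfrac12$ we have $\log 2-\myeta(\delta)>0$, so $H^U(n)\ge hn$ trivially. Combining this with (\ref{HUineq}) gives the sandwich $hn\le H^U(n)\le H(n)$. Dividing by $n$ and using the standard fact that $H(n)/n\to h$ for any stationary process over a finite alphabet (\ref{h}), both extremes tend to $h$, so $H^U(n)/n\to h$ as well.

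The main obstacle is the measurability claim that each $Z_k$ is $\mathfrak{F}$-measurable; this is not formally part of Definition \ref{defiUDP}, which only requires convergence in probability. However, it is exactly the content of the equivalence stated just after Definition \ref{defiUDP} via $Y=\sum_k 2^{-k}Z_k$, so the proof can invoke Theorem 9 of \cite{Debowski09}. A minor technical care is also needed in applying the data-processing inequality in the generalized mutual information framework of (\ref{CMIa}): one should verify that $I(X_{1:n};(Z_k)_{k\in S})$ equals $I(\mathfrak{A}_n;\mathfrak{Z}_S)$ with $\mathfrak{Z}_S\subset\mathfrak{F}$, which follows from the chain rule and continuity properties cited in the paragraph after (\ref{Ha}).
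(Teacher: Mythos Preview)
Your proof is correct and follows essentially the same approach as the paper: both use Theorem~\ref{theoEDecompF} to reduce to bounding $I(X_{1:n};\mathfrak{F})$, invoke \cite[Theorem~9]{Debowski09} to place the $Z_k$ in (the completion of) $\mathfrak{F}$, apply data processing, and then use a Fano bound together with the IID Bernoulli$(\tfrac12)$ structure of the $Z_k$. The only cosmetic difference is that you expand $I(X_{1:n};(Z_k)_{k\in U_\delta(n)})$ as $H-H(\cdot\mid X_{1:n})$ and invoke subadditivity of conditional entropy, whereas the paper writes $I(X_{1:n};(Z_k)_{k\in\mathbb{N}})=\sum_k I(X_{1:n};Z_k\mid Z_{1:k-1})$ via the chain rule and bounds each term; these are interchangeable standard manipulations. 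One small precision worth adopting from the paper: the $Z_k$ are in general only measurable with respect to the \emph{completion} of $\mathfrak{F}$, but this is harmless for mutual information.
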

\begin{proof}
  By continuity of mutual information \cite[Section
  2.2]{Pinsker60en}, \cite[Theorems 1(v) and 2(i)]{Debowski09},
\begin{align}
I\okra{X_{1:n};(Z_k)_{k\in\mathbb{N}}}
&= \lim_{k\rightarrow\infty} I(X_{1:n};Z_{1:k})
\nonumber
\\
&= \sum_{k=1}^\infty I(X_{1:n};Z_k|Z_{1:k-1}).  
\label{preUDPVoc}
\end{align}
On the other hand, 
\begin{align*}
  I(X_{1:n};Z_k|Z_{1:k-1}) 
&=H(Z_k|Z_{1:k-1})-H(Z_k|X_{1:n},Z_{1:k-1})
\\
&\ge \log 2 - H(Z_k|s_k(X_{1:n}))
\\
&\ge \log 2 -  \myeta(P(s_k(X_{1:n})=Z_k))
\end{align*}
by the Fano inequality $H(Y_1|Y_2)\le \myeta(P(Y_1=Y_2))$ for a~binary
variable $Y_2$ \cite[Theorem
2.11.1]{CoverThomas91}. 
Restricting the summation in (\ref{preUDPVoc}) to $k\in U_\delta(n)$
yields
\begin{align*}
I\okra{X_{1:n};(Z_k)_{k\in\mathbb{N}}} 
\ge \kwad{\log 2 -\myeta(\delta)}\cdot\card U_\delta(n)
\end{align*}
because $\myeta(\delta)\ge \myeta(P(s_k(X_{1:n})=Z_k))$ for
$\delta\ge\frac{1}{2}$ and $k\in U_\delta(n)$.  Since all events of
form $(Z_k=0)$ and $(Z_k=1)$ belong to the completion of algebra
$\mathfrak{F}$ by \cite[Theorem 9]{Debowski09}, we have
$I(X_{1:n};\mathfrak{F})\ge I\okra{X_{1:n};(Z_k)_{k\in\mathbb{N}}}$ by
the data processing inequality \cite[Theorem 1(iv)]{Debowski09}. In
the following, by Theorem \ref{theoEDecompF} we obtain
\begin{align*}
  H(n)&\ge hn+I(X_{1:n};\mathfrak{F})
  \\
  &\ge
  hn+I\okra{X_{1:n};(Z_k)_{k\in\mathbb{N}}}
  \ge
  H^U(n)
  .
 \end{align*}
 By $H(n)\ge H^U(n)\ge hn$, (\ref{HUrates}) holds as well.
\end{proof}

\subsection{An example of a process}
\label{ssecExample}

In this subsection we will present a~process that satisfies the
assumptions of Theorem \ref{theoQLThesis}. The process will be denoted
as $(\bar Y_i)_{i\in\mathbb{Z}}$ and will be given by stationary
coding of the process (\ref{exUDPi}).  The requirements for the
process $(\bar Y_i)_{i\in\mathbb{Z}}$ are as follows:
\begin{itemize}
\item[(a)] $(\bar Y_i)_{i\in\mathbb{Z}}$ is a~process over a~finite
  alphabet $\mathbb{Y}$,
\item[(b)] $(\bar Y_i)_{i\in\mathbb{Z}}$ is stationary,
\item[(c)] $(\bar Y_i)_{i\in\mathbb{Z}}$ has finite energy, and
\item[(d)] there exists an IID binary process $(\bar
  Z_k)_{k\in\mathbb{N}}$ with $\bar P(\bar Z_k=0)=\bar P(\bar
  Z_k=1)=\frac{1}{2}$, and functions $\bar s_k:\mathbb{Y}^*\rightarrow
  \klam{0,1}$, $k\in\mathbb{N}$, such that
  \begin{align}
    \label{UDPcondY}
    \lim_{n\rightarrow\infty} \bar P\okra{\bar s_{k}\okra{\bar
        Y_{t+1:t+n}}=\bar Z_k}&=1
    ,
    &
    &\forall t\in\mathbb{Z},\, \forall k\in\mathbb{N} 
    ,
  \end{align}
  and
  \begin{align}
    \label{UDPPowerLawY}
    \liminf_{n\rightarrow\infty} \frac{\card{\bar U_{\bar\delta}(n)}}{n^{\beta}}>0
  \end{align}
  for a~certain $\beta\in (0,1)$, all
  ${\bar\delta}\in(\frac{1}{2},1)$, and sets
  \begin{align}
    \bar U_{\bar\delta}(n):= \klam{k\in\mathbb{N}: \bar
      P\okra{\bar s_{k}\okra{\bar Y_{1:n}}=\bar Z_k}\ge {\bar\delta}}
    .
  \end{align}
\end{itemize}

The analogues of properties (b)--(d), but not (a), are satisfied by
the process (\ref{exUDPi}) with $K_i$ being IID and satisfying
(\ref{ZetaK}).  For example, let us derive (\ref{QLPremise}).  We will
write $u\subseq v$ when an infinite sequence or a~string $v$ contains
a~string $u$ as a~substring.  For
$\mathbb{X}=\mathbb{N}\times\klam{0,1}$ and
$v\in\mathbb{X}^{\mathbb{Z}}\cup\mathbb{X}^*$, define the predictors
$s_k$ as
\begin{align}
  \label{funcUDP}
  s_{k}(v):=
  \begin{cases}
    0 & \text{if $(k,0)\subseq v$ and $(k,1)\not\subseq v$}, \\
    1 & \text{if $(k,1)\subseq v$ and $(k,0)\not\subseq v$}, \\
    2 & \text{else}.
  \end{cases}
\end{align}
We have $Z_k=s_k((X_i)_{i\in\mathbb{Z}})$ almost surely.  Then let
$U_\delta(n)$ be the set of well predictable facts, defined in
(\ref{Un}). In view of equality
\begin{align*}
P(s_k(X_{1:n})=Z_k)
&=P(\text{$K_i=k$ for some $i\in\klam{1,...,n}$})
\\
&=1-[1-P(K_i=k)]^n,
\end{align*}
we have $k\in U_\delta(n)$ if and only if $P(K_i=k)\ge
1-(1-\delta)^{1/n}$. This yields
$$
U_\delta(n)\supset \klam{k\in\mathbb{N}: P(K_i=k)\ge -n^{-1}\log
  (1-\delta)} 
$$
by inequality $1-x^{1/n}\le -n^{-1}\log x$ for $x> 0$. Hence
\begin{align}
  \label{UnZeta}
  \card U_\delta(n)\ge
  \kwad{\frac{n}{-\zeta(\beta^{-1})\log(1-\delta)}}^{\beta}
  .
\end{align}
In particular, (\ref{QLPremise}) is satisfied.

Now we have to explain what stationary coding is.  Firstly, consider
a~function $f:\mathbb{X}\rightarrow\mathbb{Y}^*$ that maps single
symbols into strings. We define its
\emph{extension to double infinite sequences}
$f^{\mathbb{Z}}:\mathbb{X}^{\mathbb{Z}}\rightarrow
\mathbb{Y}^{\mathbb{Z}}\cup(\mathbb{Y}^*\times\mathbb{Y}^*)$ as
\begin{align}
  \label{InfExtension}
  f^{\mathbb{Z}}((x_i)_{i\in\mathbb{Z}})&:=
  ... f(x_{-1})f(x_{0})\textbf{.}f(x_1)f(x_2)...
  ,
\end{align}
where $x_i\in\mathbb{X}$.  (The bold-face dot separates the $0$-th and
the first symbol.)  Secondly, for the shift operation
$T((x_i)_{i\in\mathbb{Z}}):=(x_{i+1})_{i\in\mathbb{Z}}=
...x_{0}x_{1}\textbf{.}x_2x_3...$, a~measure $\mu$ on
$(\mathbb{X}^{\mathbb{Z}},\mathfrak{X}^{\mathbb{Z}})$ is called
asymptotically mean stationary (AMS) if limits
\begin{align}  
  \label{BarMu}
  \bar\mu(A)=\lim_{n\rightarrow\infty} \frac{1}{n}  \sum_{i=0}^{n-1}
  \mu\circ T^{-i}(A)
\end{align}
exist for all $A\in\mathfrak{X}^{\mathbb{Z}}$, cf.\
\cite{GrayKieffer80}. The limit $\bar\mu$, if it exists as a~total
function $\mathfrak{X}^{\mathbb{Z}}\rightarrow \mathbb{R}$, forms
a~stationary measure on
$(\mathbb{X}^{\mathbb{Z}},\mathfrak{X}^{\mathbb{Z}})$, i.e.,
$\bar\mu\circ T^{-1}=\bar\mu$, and is called the stationary mean of
$\mu$.  Every stationary measure is AMS \cite{GrayKieffer80}.
Moreover, for an AMS measure $\mu$, the transported measure $\mu\circ
\okra{f^{\mathbb{Z}}}^{-1}$ is AMS under mild conditions, cf.\
\cite[Example 6]{GrayKieffer80}, \cite{Debowski10}. The stationary
mean $\overline{\mu\circ \okra{f^{\mathbb{Z}}}^{-1}}$ will be called
the \emph{stationary coding} of measure $\mu$ w.r.t.\ function $f$.

Assuming that (\ref{UnZeta}) is satisfied, this proposition has been
proved in \cite{Debowski10}:
\begin{theorem}
  \label{theoConj}
  Let $\mu=P((X_i)_{i\in\mathbb{Z}}\in\cdot)$ be the distribution of
  the process (\ref{exUDPi}) where variables $K_i$ are IID and satisfy
  (\ref{ZetaK}). Put $\mathbb{Y}=\klam{0,1,2}$ and consider a~function
  $f:\mathbb{X}\mapsto \mathbb{Y}^+$ given as
  \begin{align}
    \label{ConjCode}
    f(k,z)=b(k)z2
    ,
  \end{align}
  where $1b(k)\in\klam{0,1}^+$ is the binary representation of
  a~natural number $k$. The process $(\bar Y_i)_{i\in\mathbb{Z}}$
  distributed according to the stationary coding of measure $\mu$
  w.r.t.\ function $f$, $\bar P((\bar
  Y_i)_{i\in\mathbb{Z}}\in\cdot)=\overline{\mu\circ
    \okra{f^{\mathbb{Z}}}^{-1}}$, satisfies conditions (a)--(d) for
  $\zeta(\beta^{-1})>4$. Variables $\bar Z_k$ may be constructed as
  $\bar Z_k=\bar s_k((\bar Y_i)_{i\in\mathbb{Z}})$ almost surely,
  where
  \begin{align}
    \bar s_k(w)
    :=
    \begin{cases}
      0 & \text{if $2b(k)02\subseq w$ 
        and $2b(k)12\not\subseq w$}, \\
      1 & \text{if $2b(k)12\subseq w$ 
        and $2b(k)02\not\subseq w$}, \\
      2 & \text{else}
    \end{cases} 
  \end{align}
  for $w\in\mathbb{Y}^{\mathbb{Z}}\cup\mathbb{Y}^*$.
\end{theorem}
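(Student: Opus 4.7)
The plan is to verify each of (a)--(d) by exploiting the unique-parsing property of the code $f(k,z)=b(k)z2$: because $b(k)z\in\klam{0,1}^+$ contains no symbol $2$, the terminator $2$ acts as an unambiguous block delimiter, and the coded sequence $f^{\mathbb{Z}}((X_i)_{i\in\mathbb{Z}})$ is uniquely parseable back into the original $(X_i)$. Property (a) is immediate. Property (b) holds because the stationary mean $\overline{\mu\circ\okra{f^{\mathbb{Z}}}^{-1}}$ is shift-invariant by construction and exists as a~total set function: the codeword lengths $\abs{f(X_i)}=2+\floor{\log_2 K_i}$ have finite expectation under (\ref{ZetaK}) for $\beta\in(0,1)$, placing us in the regime treated in \cite[Example 6]{GrayKieffer80} and \cite{Debowski10}.

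For (d), the substring $2b(k)z2$ occurs in $\bar Y_{t+1:t+n}$ if and only if a~complete codeword $f(k,z)$ fits entirely inside the window (flanked by the terminators of the neighbouring codewords), in which case necessarily $z=Z_k$. Setting $\bar Z_k:=\bar s_k((\bar Y_i)_{i\in\mathbb{Z}})$, shift-invariance of $\bar s_k$ ensures that its distribution under $\bar P$ matches its distribution under $\mu\circ\okra{f^{\mathbb{Z}}}^{-1}$, where ergodicity of $(K_i)$ and $P(K_i=k)>0$ give $\bar Z_k=Z_k$ almost surely; hence $(\bar Z_k)$ inherits the IID uniform binary law from $(Z_k)$. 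The convergence (\ref{UDPcondY}) follows because the probability of seeing $K_i=k$ inside the window tends to $1$, and the power-law lower bound (\ref{UDPPowerLawY}) is obtained by the same calculation as (\ref{UnZeta}) after observing that a~coded window of length $n$ covers $\Theta(n)$ original symbols $X_i$ with high probability, thanks to $\sred{\abs{f(X_i)}}<\infty$.

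The main obstacle is the finite-energy condition (c). Fix $w$ with $\bar P(\bar Y_{1:n}=w)>0$ and parse $w$ using the locations of the terminator $2$: only the completion of the block still in progress at position $n$ is forced by $w$, after which the future consists of fresh complete codewords. For each such complete codeword of prescribed content $b(k)z2$ the conditional probability is bounded by $P(K_i=k)\le 1/\zeta(\beta^{-1})<1/4$ by independence of the fresh $K_i$ from the past and the hypothesis $\zeta(\beta^{-1})>4$; this is precisely the reason for the numerical condition. If codeword lengths were bounded by some constant $L$, a~window of length $m$ would contain at least $\floor{(m-O(1))/L}$ independent complete blocks fixed by $u$, yielding the required exponential bound $Kc^m$ with $c=4^{-1/L}$. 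The delicate point is that $\abs{f(X_i)}$ is only polynomially concentrated (power-law tail in $K_i$), so the argument must truncate the codeword-length distribution at a~carefully chosen $L$ and balance the polynomial mass on long codewords against the additional binary freedom inside them; this balancing, made rigorous in \cite{Debowski10}, is the technical heart of the proof.
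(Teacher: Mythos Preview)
The paper does not give a proof of this theorem; it states the result and refers entirely to \cite{Debowski10}, adding only the remark that the numerical condition $\zeta(\beta^{-1})>4$ arises from verifying the finite-energy property (c). Your sketch is consistent with this treatment: conditions (a), (b), and (d) are handled essentially as you indicate, and for (c) you likewise identify the key difficulty (unbounded codeword lengths forcing a truncation-and-balancing argument) and defer to \cite{Debowski10} for the rigorous version. There is therefore no in-paper proof to compare against; your outline matches the structure one expects from the cited source, and the paper itself confirms your attribution of the constraint $\zeta(\beta^{-1})>4$ to condition (c).
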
 

Inequality $\zeta(\beta^{-1})>4$ holds for $\beta>0.7728...$ and comes
from satisfying condition (c).  Processes $(\bar
Y_i)_{i\in\mathbb{Z}}$ and $(X_i)_{i\in\mathbb{Z}}$ live on different
probability spaces, say $(\Omega,\mathfrak{J},\bar P)$ and
$(\Omega,\mathfrak{J},P)$ respectively.

\appendix
\renewcommand{\thesection}{\Alph{section}}

\section{Excess-bounding lemma}
\label{ssecExcessBound}

This proposition is a~variation of a~more specific statement in
\cite[Theorem 2]{Debowski06}:
\begin{lemma}
\label{theoExcessBound}
Consider a~function $G:\mathbb{N}\rightarrow\mathbb{R}$ such that
$\lim_k G(k)/k=0$ and $G(n)\ge 0$ for all but finitely many $n$. For
any $A\in\mathbb{N}$ and infinitely many $n$, we have $AG(n)- G(An)\ge
0$.
\end{lemma}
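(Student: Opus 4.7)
The plan is to prove this by contradiction, exploiting the sublinearity hypothesis $G(k)/k \to 0$ through iteration of the assumed-for-contradiction inequality.

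First I would dispose of the trivial case $A=1$, where $AG(n)-G(An)=0$ for every $n$. So assume $A\ge 2$. Suppose toward contradiction that the conclusion fails, i.e.\ that there exists $n_0$ with $AG(n)-G(An)<0$ for every $n\ge n_0$. Pick also $n_1$ so that $G(n)\ge 0$ for all $n\ge n_1$, and set $N:=\max(n_0,n_1)$. For any $n\ge N$ we then obtain
\begin{equation*}
  G(An)>AG(n)\ge 0,
\end{equation*}
so $G(An)$ is \emph{strictly} positive. Set $m:=An$; since $A\ge 2$, we have $m\ge N$.

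Next I would iterate. Because $m\ge N$, the assumed inequality applies to $m$ as well, giving $G(Am)>AG(m)$, and more generally, a straightforward induction on $k$ yields
\begin{equation*}
  G(A^{k}m)>A^{k}G(m)
  \quad\text{for all }k\ge 0.
\end{equation*}
Dividing both sides by $A^{k}m$ produces $G(A^{k}m)/(A^{k}m)>G(m)/m>0$, a strictly positive lower bound that holds along the entire subsequence $A^{k}m\to\infty$. This contradicts the hypothesis $\lim_{j\to\infty}G(j)/j=0$ and finishes the proof.

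I do not anticipate a genuine obstacle here; the only subtle point is to make sure one lands on an index $m\ge N$ where $G(m)>0$ before iterating, which is why I pass from the arbitrary starting point $n$ to $m=An$ (where strict positivity is guaranteed by combining the negation of the claim with the eventual nonnegativity of $G$). The induction step and the final division by $A^{k}m$ are then routine.
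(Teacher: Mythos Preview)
Your argument is correct (with the harmless slip that the displayed inequality $G(A^{k}m)>A^{k}G(m)$ holds for $k\ge 1$, not $k\ge 0$; the contradiction at $k\to\infty$ is unaffected).

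The paper takes a somewhat different route. Instead of iterating the strict inequality directly, it establishes the telescoping identity
\[
  \sum_{k=0}^{m-1}\frac{AG(A^{k}n)-G(A^{k+1}n)}{A^{k+1}}
  = G(n)-n\cdot\frac{G(A^{m}n)}{A^{m}n},
\]
lets $m\to\infty$ using $G(k)/k\to 0$, and then specializes to $n=A^{p}$ so that the tail $\sum_{k\ge p}(\cdots)$ equals $G(A^{p})\ge 0$ for large $p$; if all terms $AG(A^{k})-G(A^{k+1})$ were eventually negative, that tail would be negative, a contradiction. Both proofs live on the same geometric subsequence $\{A^{k}\}$, but yours is the more elementary and direct contradiction, while the paper's identity is a bit more informative: it expresses $G(n)$ exactly as a weighted series of the ``excess'' terms, which is the kind of relation the paper later leans on when passing from (\ref{DiffFF})--(\ref{DiffFGBeta}).
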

\begin{proof}
  We have the identity
  \begin{align*}
    \sum_{k=0}^{m-1} 
    \frac{AG(A^k n)-G(A^{k+1} n)}{A^{k+1}}=
    G(n)-n\cdot \frac{G(A^m n)}{A^m n}
    .
  \end{align*}
  For $m$ tending to infinity, this implies
  \begin{align*}
    \sum_{k=0}^\infty \frac{AG(A^k n)-G(A^{k+1} n)}{A^{k+1}}
    =
    G(n)
    .
  \end{align*}
  Putting $n=A^p$, we obtain
  \begin{align}
    \label{ESeries}
    \sum_{k=p}^\infty \frac{AG(A^k)-G(A^{k+1})}{A^{k+1}}\ge 0
  \end{align}
  for all but finitely many $p$.  On the other hand, if
  $AG(n)- G(An)\ge 0$ did not hold for infinitely many $n$ then the
  sum in (\ref{ESeries}) would be strictly negative for all
  sufficiently large $p$.
 \end{proof}
 
 Two specific applications of that lemma, namely implications
 (\ref{DiffFFBeta}) and (\ref{DiffFGBeta}), are used in this paper.
 Firstly, consider functions $G_1(n)\ge G_2(n)\ge 0$ and
 $F_i(n)=2G_i(n)-G_i(2n)$. If the functions $G_i$ have equal limits
 $\lim_n G_i(n)/n=g<\infty$ then
\begin{align}
  \label{DiffFF}
  \limsup_{n\rightarrow\infty} & \kwad{F_2(n) - F_1(n)}\ge 0  
\end{align}
follows from Lemma \ref{theoExcessBound}.  From inequality
(\ref{DiffFF}) and inequality
$\limsup_{n}\, (a_n+b_n)\ge \limsup_{n} a_n+ \liminf_{n} b_n$,
we obtain
\begin{align}
  \label{DiffFFBeta}
  \liminf_{n\rightarrow\infty} \frac{F_1(n)}{n^\beta}>0
  \implies 
  \limsup_{n\rightarrow\infty} \frac{F_2(n)}{n^\beta}>0
  .
\end{align}
Secondly, if $G_1(n) = gn + \tilde G_1(n)$ then
\begin{align}
  \label{DiffFGBeta}
  \liminf_{n\rightarrow\infty} \frac{\tilde G_1(n)}{n^\beta}>0
  \implies 
  \limsup_{n\rightarrow\infty} \frac{F_2(n)}{n^\beta}>0
  .
\end{align}
A~brief justification of the latter statement is as follows. The
left-hand side implies that $G(n)=G_2(n)-gn-Bn^\beta\ge 0$ for all but
finitely $n$ and a~certain $B>0$. Then it suffices to apply Lemma
\ref{theoExcessBound} to obtain the right-hand side.

\section{A bound for the longest repeat}
\label{ssecMaximalRepeat}

Let us prove a bound for the maximal length of a~repeat, defined
in (\ref{DefL}):
\begin{lemma}
  \label{theoFEL}
  For a~finite-energy process $(X_i)_{i\in\mathbb{Z}}$ over
  a~countable alphabet $\mathbb{X}$,
  \begin{align}
   \label{FELExp}
   \sup_{n\ge 2}
   \sred{\okra{\frac{\mathbf{L}(X_{1:n})}{\log n}}^{q}}
   &<\infty
   ,
   \quad 
   q>0
   .
 \end{align}
\end{lemma}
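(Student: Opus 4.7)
I would prove this by getting a tail bound on $\mathbf{L}(X_{1:n})$ of the form $P(\mathbf{L}(X_{1:n})\ge m)\le K' n^2 c^m$ and then converting it to a uniform moment bound.

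\textbf{Step 1 (union bound).} The event $\mathbf{L}(X_{1:n})\ge m$ forces the existence of two distinct starting positions $1\le i<j\le n-m+1$ with $X_{i:i+m-1}=X_{j:j+m-1}$, so
\begin{align*}
P(\mathbf{L}(X_{1:n})\ge m)
\le \sum_{1\le i<j\le n}
P\okra{X_{i:i+m-1}=X_{j:j+m-1}}.
\end{align*}

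\textbf{Step 2 (bounding a pair by finite energy).} I would show that each summand is $\le Kc^m$, splitting by whether the two blocks overlap. If $j\ge i+m$, then conditioning on $X_{1:j-1}$ and applying (\ref{FE}) gives $P(X_{j:j+m-1}=u\mid X_{1:j-1})\le Kc^m$, so summing over $u$ yields the bound. If $1\le d:=j-i<m$, the coincidence $X_{i:i+m-1}=X_{j:j+m-1}$ forces $X_{i:j+m-1}$ to be periodic with period $d$, so it is determined by its first $d$ symbols; conditioning on $X_{1:i+d-1}$ and applying (\ref{FE}) to the block $X_{i+d:i+d+m-1}$ again gives a $Kc^m$ bound after summing over the seed. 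In either case
\begin{align*}
P(\mathbf{L}(X_{1:n})\ge m)\le \tfrac12 n(n-1) Kc^m \le K' n^2 c^m.
\end{align*}

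\textbf{Step 3 (moment bound).} Set $a:=\log(1/c)>0$ and substitute $m=\lceil\alpha\log n\rceil$ to obtain
\begin{align*}
P\okra{\mathbf{L}(X_{1:n})/\log n \ge \alpha}\le K'' n^{2-a\alpha}.
\end{align*}
Then, for $n\ge 2$,
\begin{align*}
\sred{\okra{\mathbf{L}(X_{1:n})/\log n}^q}
= q\int_0^\infty \alpha^{q-1}
P\okra{\mathbf{L}(X_{1:n})/\log n>\alpha}\, d\alpha.
\end{align*}
Splitting the integral at a fixed $T>2/a$, the contribution from $[0,T]$ is at most $T^q$, and on $[T,\infty)$ the tail bound gives $\alpha^{q-1}K'' n^{2-a\alpha}\le \alpha^{q-1}K'' 2^{-(a\alpha-2)\log_2 e\cdot \log 2}$, whose integral is finite and independent of $n$. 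The finitely many small values of $n$ are handled trivially (e.g.\ noting $\mathbf{L}(X_{1:n})\le n$), giving (\ref{FELExp}).

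\textbf{Main obstacle.} The only genuinely delicate point is the overlapping case in Step 2: there one must notice that a length-$m$ repeat at distance $d<m$ imposes strict periodicity on a window of length $m+d$, which is what allows the finite-energy estimate to be applied to a \emph{single} block of length $m$ rather than to the whole window. Once that periodicity observation is in place, the rest is a routine tail-to-moment conversion.
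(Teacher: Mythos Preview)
Your proof is correct and follows essentially the same strategy as the paper: a union bound over pairs of starting positions, each pair probability bounded by $Kc^m$ via the finite-energy condition, and then a routine tail-to-moment conversion. The only difference is that the paper avoids your overlap/non-overlap case split by conditioning on $X_{1:j}=w$ and observing (after Kontoyiannis--Suhov) that the event $X_{j+1:j+m}=X_{i+1:i+m}$ then forces $X_{j+1:j+m}$ to equal a \emph{single} string determined by $w$ even when the blocks overlap, since the coincidence propagates periodically---so the ``delicate'' point you flag dissolves into the same one-line estimate.
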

\emph{Remark:} The almost sure version of bound (\ref{FELExp}) was
shown for finite-energy processes over a~finite alphabet by Shields
\cite{Shields97}. Such bound does not hold for stationary processes in
general, cf. \cite{Shields92b}.

\begin{proof}
  Assume (\ref{FE}) and consider a~$j>i\ge 0$. Applying the idea from
  \cite{KontoyiannisSuhov94}, let us notice that given $X_{1:j}=w$
  condition $X_{j+1:j+k}=X_{i+1:i+k}$ is equivalent to $X_{j+1:j+k}=u$
  for a~certain string $u$ that is a~function of $w$. Thus
  \begin{align*}
    &P(X_{j+1:j+k}=X_{i+1:i+k})
    \\
    &=\sum_{w\in\mathbb{X}^j}
    P(X_{j+1:j+k}=X_{i+1:i+k}|X_{1:j}=w) P(X_{1:j}=w)
    \\
    &\le \sum_{w\in\mathbb{X}^j}Kc^k P(X_{1:j}=w) =Kc^k
    .
  \end{align*}
  Hence
  \begin{align*}
    P(\mathbf{L}(X_{1:n})\ge k)
    &= P(\exists_{0\le i<j\le n-k} X_{j+1:j+k}=X_{i+1:i+k})
    \\
    &\le 
    \sum_{0\le i<j\le n-k} P(X_{j+1:j+k}=X_{i+1:i+k})
    \\
    &\le
    \frac{(n-k)(n-k-1)}{2} Kc^k
    \le \frac{n^2 Kc^k}{2}
    .
  \end{align*}
  This bound is nontrivial for $k>A:=(2\log n+\log K-\log 2)/\log
  c^{-1}$. Consider a~sufficiently large $n$ so that $A\ge 1$. Then
  inequality (\ref{FELExp}) follows from the series of inequalities
  \begin{align*}
    \sred{(\mathbf{L}(X_{1:n}))^q}
    &\le A^{q} +\sum_{k> A}  k^q P(\mathbf{L}(X_{1:n})\ge k)
    \\
    &\le \sum_{k=0}^\infty (k+A)^q c^{k} \le  A^q \sum_{k=0}^\infty (k+1)^q c^{k}
    ,
  \end{align*}
  where 
  $\sum_{k=0}^\infty (k+1)^q c^{k}< \infty$.
\end{proof}


\section*{Acknowledgment}

The author wishes to thank Peter Harremo\"{e}s, Peter Gr\"unwald,
Alfonso Martinez, Jan Mielniczuk, Jacek Koronacki, Laurence Cantrill,
Paul Vitanyi, En-hui Yang, and an anonymous referee for discussing the
challenging composition of this paper.  Special thanks are due to
James Crutchfield, Arthur Ramer, and Peter Gr\"unwald for inviting the
author's to the Santa Fe Institute, the University of New South Wales,
and the Centrum Wiskunde \& Informatica, where important constructions
for this paper were completed. Finally, we thank cordially Prof.\
Emerit.\ Gabriel Altmann for many words of support during our
investigations.




 


\end{document}